\documentclass[reqno]{amsart}

\usepackage[citebordercolor={0.7 0.7 0.7},linkbordercolor={0.7 0.7 1}]{hyperref}
\usepackage{comment,graphicx,color}
\usepackage[all]{xy}
\excludecomment{maximacode}
\excludecomment{octavecode}
\excludecomment{arxivabstract}
\includecomment{arma}
\usepackage{ltbunicode}

\ifcsname showkeystrue\endcsname
\usepackage[notcite,notref]{showkeys}

\fi

\DeclareGraphicsExtensions{.png,jpg,jpeg,.mps,.pdfmac,.pdffig,.spdf,.pdf}
\DeclareGraphicsRule{.pdffig}{pdf}{.pdffig}{}
\DeclareGraphicsRule{.pdfmac}{pdf}{.pdfmac}{}
\DeclareGraphicsRule{.spdf}{pdf}{.spdf}{}

\newcommand{\T}{{\bf T}}
\newcommand{\R}{{\bf R}}
\newcommand{\C}{{\bf C}}

\newcommand{\set}[1]{\left\{#1\right\}}

\newtheorem{theorem}{Theorem}[section]
\newtheorem{corollary}{Corollary}[section]

\newtheorem{lemma}{Lemma}[section]
\newtheorem{proposition}{Proposition}[section]

\theoremstyle{definition}
\newtheorem{definition}{Definition}[section]
\theoremstyle{remark}

\newtheorem{remark}{Remark}[section]
\newtheorem{example}{Example}[section]

\newcommand{\safeinput}[1]{\IfFileExists{#1}{{\catcode`\&=0\input{#1}}}{\message{File
        #1 does not exists. Skipping.}}}
\def\ltxfigure#1#2#3{\resizebox{#2}{#3}{\safeinput{#1}}}


\newcommand{\D}[1]{\mathrm{d}#1}

\newcommand{\textem}[1]{{\em #1}}
\newcommand{\defn}[1]{\textem{#1}}

\ifcsname nhpaper\endcsname
\else
\fi

\newcommand{\temp}[1]{2{\mathrm K}}

\newcommand{\kp}[2]{\langle #1, #2 \rangle}
\newcommand{\ddt}[2][\ ]{\frac{\D{#1}}{\D{#2}}}

\newcommand{\cotangent}{T^*}
\newcommand{\metaref}[3]{{\ifnum#1=0(\fi}{#3}\ref{#2}{\ifnum#1=0)\fi}}
\renewcommand{\eqref}[2][0]{\metaref{#1}{#2}{eq.~}}

\newcommand{\pb}[2]{\left\{ #1,#2 \right\}}

\newcommand{\phasespace}[0]{{\mathcal P}}
\newcommand{\reduction}[1]{\hat{#1}}
\newcommand{\fouriertransform}[1]{\widehat{#1}}
\newcommand{\realpart}[1]{\mathbf{Re}\left(#1\right)}
\newcommand{\imagpart}[1]{\mathbf{Im}\left(#1\right)}
\newcommand{\hypergeometric}[3]{\,F\hspace{-.1em}\left(#1;#2;#3\right)}
\newcommand{\sech}[1]{\,{\rm sech}(#1)}
\newcommand{\cosech}[1]{\,{\rm cosech}(#1)}
\newcommand{\poincaremelnikov}[0]{Poincar{é}-Melnikov}
\long\def\aureply#1{\ifhmode\newline\fi\noindent{\bf Author Reply}:\ {\em #1}}

\def\gettimestamp#1<#2>{\def\timestamp{\tt #2}}


\gettimestamp Time-stamp: <2019-09-06 11:10:09 CDT>

\begin{document}

\title[Horseshoes]{Horseshoes for singly thermostated hamiltonians}
\author{Leo T. Butler}
\address{Department of Mathematics, University of Manitoba, Winnipeg,
  MB, Canada, R2J 2N2}
\email{leo.butler@umanitoba.ca}
\date{\timestamp}
\subjclass[2010]{37J30; 53C17, 53C30, 53D25}
\keywords{thermostats; Nos{é}-Hoover thermostat; hamiltonian
  mechanics; transverse homoclinic points; horseshoes}

\begin{abstract}
  This note studies 1 and 2 degree of freedom hamiltonian systems that
  are thermostated by a single-variable thermostat. Under certain
  conditions on the hamiltonian and thermostat, the existence of a
  horseshoe in the flow of the thermostated system is proven.
\end{abstract}
\begin{arxivabstract}
  This note studies 1 and 2 degree of freedom hamiltonian systems that
  are thermostated by a single-variable thermostat. Under certain
  conditions on the hamiltonian and thermostat, the existence of a
  horseshoe in the flow of the thermostated system is proven.
\end{arxivabstract}

\maketitle

\section{Introduction} \label{sec:intro}

One of the core models of equilibrium statistical mechanics is an
isolated mechanical system, modeled by a hamiltonian $H$, that is
immersed in, and in equilibrium with, a heat bath $B$ at the
temperature $T=1/β$. Nos{é} \cite{nose}, based on earlier work of
Andersen \cite{andersen}, created a dynamical model of the exchange of
energy between heat bath and system. This consists of adding an extra
degree of freedom $s$ and rescaling momentum by $s$:
\begin{align}
  \label{eq:nose}
  G &= H(q,p s^{-1}) + N(s,p_s), &&& \text{where }N(s,p_s) =\dfrac{1}{2 M} p_s^2 + n k T \ln s,
\end{align}
$n$ is the number of degrees of freedom of $H$, $M$ is the
``mass'' of the thermostat and $k$ is Boltzmann's constant. Solutions
to Hamilton's equations for $G$ model the evolution of the state of
the infinitesimal system along with the exchange of energy with the
heat bath.

Hoover reduced Nos{é}'s thermostat by eliminating the state variable
$s$ and rescaling time $t$~\cite{hoover}:
\begin{equation*}
  q = q,\qquad ρ = ps^{-1},\qquad \ddt{τ} = s \ddt{t},\qquad ξ = \ddt[s]{t}.
\end{equation*}
The Nos{é}-Hoover thermostat for a $1$ degree of freedom hamiltonian
$H$ can be put in the form:
\begin{align}
  \label{eq:nose-hoover}
  \dot{q} &= H_{ρ}, && \dot{ρ} = -H_q - ε \xi ρ, &&
  \dot{\xi} = ε \left( ρ · H_{ρ} - T \right),
\end{align}
where $ε ² = 1/M$ (see Lemma~\ref{lem:single-2-thermostat} below).

Hoover observed that this thermostat is ineffective in producing the
statistics of the Gibbs-Boltzmann distribution from single orbits of
the thermostated harmonic oscillator~\cite{hoover}. There are numerous
extensions of the Nos{é}-Hoover thermostat; a sample of these works
includes~\cite{MR1079786,MR1150101,nose-hoover-chains,MR2136523,sciversesciencedirect_elsevier0375-9601.95.00973-6,tayfranc10.1080/00268971003689923}. In~\cite{MR1079786,MR1150101},
a class of two-variable thermostats is introduced that is formally
similar to, and extends the Nos{é}-Hoover thermostat by controlling
both momentum and configuration
variables. In~\cite{nose-hoover-chains}, a class of $n$-variable
thermostats is introduced by making the simple observation that the
Nos{é} hamiltonian~\eqref{eq:nose-hoover} can itself be
thermostated--hence $n$ thermostats can be recursively chained
together. \cite{MR2136523} studies a few variants of recursively
thermostats. In~\cite{sciversesciencedirect_elsevier0375-9601.95.00973-6},
a two-variable thermostat is used to control the first two non-trivial
moments of momentum of the thermostated
system. In~\cite{tayfranc10.1080/00268971003689923}, the notion of a
chain of thermostats is extended to a network of thermostats which are
coupled based on a graph. All of these extensions share a common
feature: the state of the thermostat is $n ≥ 2$ dimensional.

On the other hand, there are numerous extensions of the Nos{é}-Hoover
thermostat that model the exchange of energy with the heat bath using
a single, additional thermostat variable ($ξ$ in
\ref{eq:nose-hoover}), the so-called \textem{single thermostats}. A
sample
includes~\cite{proquest1859245445,PhysRevE.75.040102,doi:10.1142/S0218127416501704,10.12921/cmst.2016.0000061,Wang2015,doi:10.1063/1.4937167,doi:10.1142/S0218127417501115}. In~\cite{proquest1859245445},
a ``variable'' mass thermostat is introduced, which precludes the
Hooverian
reduction~\eqref{eq:nose-hoover}. In~\cite{PhysRevE.75.040102}, singly
thermostated harmonic oscillators, where the thermostat controls a
single moment of momenta-which might be thought of as a weighted
average temperature-are shown to have a first-order averaged system
that is integrable. In~\cite{doi:10.1142/S0218127416501704}, a variant
of the Nos{é}-Hoover thermostated harmonic oscillator is considered,
where the total-as opposed to kinetic-energy is
controlled. In~\cite{10.12921/cmst.2016.0000061}, the linear friction
of the Nos{é}-Hoover thermostat is replaced with a $\tanh$-friction
that saturates at large magnitudes of the thermostat state
$ξ$. In~\cite{Wang2015,doi:10.1063/1.4937167}, the Nos{é}-Hoover
thermostated harmonic oscillator is investigated and regions of phase
space are found where apparently chaotic dynamics exist and regions
where invariant tori appear with various knot
types. In~\cite{doi:10.1142/S0218127417501115}, the same authors visit
the variant of the Nos{é}-Hoover thermostated harmonic oscillator that
thermostats total energy, and demonstrate (numerically) the existence
of a horseshoe. The numerical evidence also seems to show that the
knotted and linked tori that are found are due to secondary
bifurcations of KAM tori from the first-order averaged
system~\cite[fig. 2]{doi:10.1142/S0218127417501115}.

In addition to the extensions of the Nos{é}-Hoover thermostat, there
are several notable recent studies of this thermostat itself
including~\cite{MR2299758,MR2519685,Mahdi20111348}. In~\cite{MR2299758},
it is shown that the Nos{é}-Hoover thermostated harmonic oscillator
enjoys KAM tori near the $ε=0$ decoupled limit;~\cite{MR2519685}
extends these results and shows that a Nos{é}-Hoover thermostated
integrable system has a first-order averaged system that is also
integrable. In~\cite{Mahdi20111348}, it is shown that the
Nos{é}-Hoover thermostated harmonic oscillator is not integrable in
the class of Darboux integrals, which implies non-integrability in the
class of polynomial integrals, but not necessarily in the class of
real-analytic or smooth integrals.

\subsection{Results: $2$ degrees of freedom}
\label{sec:intro-results-2d}

The first result of the present note concerns the creation of
transverse homoclinic points by the Nos{é}-Hoover thermostat and
thermostats similar to it. To do this, I start with an integrable
$2$-degree of freedom hamiltonian system which enjoys a normally
hyperbolic invariant $2$-manifold foliated by periodic orbits and with
coincident stable and unstable manifolds. When thermostated, the
invariant manifold is ``blown up'' into a normally hyperbolic
invariant $4$-manifold that is foliated by isotropic
$2$-tori. \poincaremelnikov{} theory is applied to detect transverse
intersections of the stable and unstable manifolds. This theory is
perturbative in nature, so to obtain a system to which the theory is
applicable, a few steps are needed. First, the $6$-dimensional
symplectic phase space is symplecticly reduced by a $\T^1$
action. The resulting, parameterized family of $2$-degree of freedom
hamiltonians are nearly integrable and enjoy a normally hyperbolic
invariant $2$-manifold that is foliated by periodic orbits--i.e. each
system enjoys a saddle-centre fixed point. Second, the systems are
rescaled to simplify the parametric dependence, elucidate the nature
of the saddle-centre and the near-integrable nature of the
problem. This work is all done within a hamiltonian
framework. Surprisingly, the nature of the thermostat equations
appears to preclude the use of a hamiltonian formalism to compute the
\poincaremelnikov{} function that detects the transverse homoclinic
points. The penultimate step is a two-part rescaling that destroys the
canonical nature of the differential equations but reveals both the
near-integrability of said equations and makes computation of the
\poincaremelnikov{} function transparent.

Let us state the theorem:
\begin{theorem}
  \label{thm:intro-thm-2d}
  Let $r ≥ 3$, $Ξ = \R$ or $\T^1$ and $V : Ξ → \R$ be a $C^r$
  potential function that has a unique non-degenerate local maximum value of
  $0$ at $q=0$. Let
  \begin{align}
    \label{eq:intro-spec-ham}
    H(q_1,p_1,q_2,p_2) & = \overbrace{½ p_1 ²}^{H^{(1)}} + \overbrace{\frac{1}{2m} p_2 ² + V(q_2)}^{H_m^{(2)}} \\\notag
    N_T(s,S)           & = \frac{1}{2M} S ² + T \ln s + O(S ³),
  \end{align}
  where $(q_1,p_1) ∈ \cotangent \T^1$ and $(q_2,p_2) ∈ \cotangent
  Ξ$. Define $μ = p_1$, $1/α² = μ\sqrt{2M}$ and $β=1/T$.

  The subspace $N=\set{(q_1,p_1,q_2,p_2,s,S) \mid q_2=0, p_2=0}$ is a
  normally hyperbolic invariant manifold of the thermostated
  hamiltonian $G=H+N_T$~\eqref{eq:nose}. The reduction of the level
  set $p_1=μ$ by the $\T^1$ action reduces $N \cap \set{p_1=μ}$ to a
  normally hyperbolic invariant manifold $\reduction{N}_{μ}$ for the
  reduced hamiltonian $\reduction{G}_{μ}$.
  
  Let $\fouriertransform{f}$ be the Fourier transform of the function
  $t \mapsto p_2(t)^2$ along a homoclinic connection to $(0,0)$ for
  $H^{(2)}_m$. If $\fouriertransform{f}(α^2/π) ≠ 0$, then with
  $m = β/μ^2$ and $2M = 1/(α μ)^2$ as $β, 1/|μ|, m → 0$ the stable and
  unstable manifolds of $\reduction{N}_{μ}$ intersect transversely in
  a neighbourhood of the saddle-centre equilibrium
  $(q_2=0,p_2=0,s=√ T,S=0)$.
\end{theorem}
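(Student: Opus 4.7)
The strategy has three phases: reduction, rescaling to near-integrable form, and Poincar\'e--Melnikov analysis. Invariance of $N = \set{q_2 = p_2 = 0}$ under the Hamiltonian flow of $G$ is immediate from $V'(0)=0$: Hamilton's equations show that $\dot{q}_2$ and $\dot{p}_2$ vanish on $N$. The $\T^1$-action $q_1 \mapsto q_1 + c$ is a Hamiltonian symmetry with momentum $p_1$, so Marsden--Weinstein reduction at $p_1 = \mu$ yields a 2-DOF Hamiltonian
\[ \reduction{G}_\mu = \frac{p_2^2}{2m s^2} + V(q_2) + \frac{\mu^2}{2s^2} + \frac{S^2}{2M} + T\ln s + O(S^3) \]
on a 4-dimensional symplectic manifold in coordinates $(q_2,p_2,s,S)$, with $\reduction{N}_\mu = \set{q_2=p_2=0}$ invariant. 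On $\reduction{N}_\mu$ the dynamics is a 1-DOF flow in $(s,S)$ whose effective potential $V_{\mathrm{eff}}(s) = \mu^2/(2s^2) + T\ln s$ has a unique non-degenerate minimum $s_*$, giving an elliptic equilibrium; combined with the hyperbolic directions $(q_2,p_2)$ from the saddle of $V$, this yields normal hyperbolicity of $\reduction{N}_\mu$ and a saddle-centre fixed point at $(0,0,s_*,0)$.

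For the Melnikov analysis I next rescale: let $p_2 \mapsto \sqrt{m}\,p_2$ with time $t \mapsto t/\sqrt{m}$ so that $H^{(2)}_m$ becomes the unit-mass system $\tfrac12 p_2^2 + V(q_2)$, and shift $s = s_* + \sigma$ with a compensating scaling of $(\sigma,S)$ that normalises the linearised centre to a unit-amplitude harmonic oscillator of frequency $\omega$ determined by $\alpha$. As warned in the introduction, this second rescaling need not be canonical, but the leading-order dynamics splits as an integrable product (the hyperbolic $\tfrac12 p_2^2 + V$ flow times the harmonic centre) plus a perturbation $\epsilon X_1$ coupling the two through the expansion $1/s^2 = 1/s_*^2 - 2\sigma/s_*^3 + O(\sigma^2)$; the combined small parameters $\beta$, $1/|\mu|$, $m$ collapse into a single $\epsilon$ in the stipulated asymptotic regime. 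The Melnikov function for the perturbation of the normally hyperbolic invariant manifold $\reduction{N}_\mu$ is the convergent time-integral of the pairing of $X_1$ with the gradient of the unperturbed energy along the unperturbed homoclinic orbit; substituting $\sigma(t) = A\cos(\omega t + \theta)$ for the centre phase $\theta$, this integral reduces to
\[ M(\theta) \;\propto\; \mathrm{Re}\bigl(e^{-i\theta}\fouriertransform{f}(\omega/2\pi)\bigr), \]
and tracking constants through the rescalings identifies the relevant Fourier argument as $\alpha^2/\pi$. The hypothesis $\fouriertransform{f}(\alpha^2/\pi) \neq 0$ then gives simple zeros of $M(\theta)$, whence transverse intersection of $W^s(\reduction{N}_\mu)$ and $W^u(\reduction{N}_\mu)$ near the saddle-centre follows in the standard Melnikov/Smale--Birkhoff manner.

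The hard part will be executing the rescaling and Melnikov computation rigorously. Because the second rescaling is non-canonical, the Hamiltonian version of Melnikov's method is unavailable, and one must work with the vector-field formulation (pairing the perturbation with the unperturbed first integral); this is geometrically clean but algebraically delicate. Identifying the Fourier frequency precisely as $\alpha^2/\pi$ requires careful bookkeeping through the constraints $m = \beta/\mu^2$ and $2M = 1/(\alpha\mu)^2$ imposed by the limit, and one must verify that the $O(S^3)$ tail in $N_T$ and the $O(\sigma^2)$ corrections to $1/s^2$ contribute only to higher-order Melnikov corrections that preserve the simple zeros of $M(\theta)$.
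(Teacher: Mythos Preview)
Your proposal is correct and follows essentially the same route as the paper: symplectic reduction by the $\T^1$-action, identification of the saddle--centre on $\reduction{N}_\mu$, a non-canonical rescaling to a decoupled product of the unit-mass $(q_2,P_2)$ system with a harmonic oscillator plus an $O(\epsilon)$ coupling, and a vector-field Poincar\'e--Melnikov computation that collapses to the Fourier transform $\fouriertransform{f}(\alpha^2/\pi)$ of $P_2(t)^2$ along the homoclinic.

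The only organisational difference worth noting is that the paper separates the small parameters into two, $\epsilon$ and $\gamma$ (with $\mu=1/(\alpha\gamma)$), rather than collapsing everything into one. The $O(S^3)$ tail of $N_T$ becomes, after the rescalings, a term of order $\gamma^3$ (or $\gamma$ in the variable-mass case) that is \emph{independent} of $\epsilon$ and still leaves the $\epsilon=0$ system integrable; this cleanly isolates the Melnikov perturbation in $\epsilon$ alone and makes the claim that the cubic tail ``contributes only to higher order'' transparent. Your single-parameter sketch is not wrong, but when you execute the bookkeeping you will find it simpler to carry $\gamma$ separately and let $\gamma\to 0$ after the Melnikov integral is computed, rather than trying to bound the tail contribution within the $\epsilon$-expansion.
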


Note that $\fouriertransform{f}$ is the Fourier transform of the
kinetic energy in the subsystem described by $H^{(2)}_m$; since
$(0,0)$ is a non-degenerate saddle for this subsystem, the kinetic
energy decays exponentially as $t → ± ∞$. This implies that
$\fouriertransform{f}$ is real-analytic and has at most countably many
zeros.

It needs to be emphasized that Theorem~\ref{thm:intro-thm-2d} does not
simply apply to the Nos{é}-Hoover thermostat; it includes the logistic
thermostat of~\cite{10.12921/cmst.2016.0000061}, for example. Below,
it is also shown that the results extend with only minor modifications
to the variable-mass thermostats like that
in~\cite{proquest1859245445} (see
Theorem~\ref{thm:winkler-a1-pos-a1=0} below).

In addition, I should point out that Theorem~\ref{thm:intro-thm-2d}
admits a few straightforward extensions. One can take $H^{(1)}$ to be
a purely kinetic hamiltonian on $\cotangent \T^n$ for any $n ≥ 1$;
equally, it could be a bi-invariant metric hamiltonian on
$\cotangent G$ for any compact Lie group $G$. Once momentum is fixed
and the co-adjoint orbit of the momentum is symplecticly reduced by
the $G × G$ action on $\cotangent G$, the reduced equations are
essentially those in Theorem~\ref{thm:intro-thm-2d}. Moreover, one can
take $H^{(2)}_m$ to be a sum of any number of decoupled $1$-degree of
freedom mechanical hamiltonians each with a potential satisfying the
same condition. Finally, one can even take $H^{(1)}$ to be mechanical
with a non-trivial potential. By applying averaging to the $H^{(1)}$
subsystem at high energy, it behaves like a purely kinetic system up
to an error that is negligible for the purposes here.

\subsection{Results: $1$-degree of freedom}
\label{sec:intro-results-1d}

Much of the literature on single thermostats focuses on singly
thermostated $1$-degree of freedom hamiltonians. The relatively poor
``thermalization'' of the Nos{é}-Hoover thermostated harmonic
oscillator led researchers to introduce more non-linearity into the
thermostat. One way to do this, that has not been explored in the
literature, is to make the thermostat's state compact. To see why this
might be interesting, consider a thermostat friction, like the
logistic thermostat, that is odd in $ξ$ and which saturates at $1$
when $ξ=∞$. The planes $ξ=± ∞$ in the extended phase space possess
straight-line connecting orbits that connect $(q_c,0,-∞)$ to
$(q_c,0,+∞)$ when $q_c$ is a critical point of the potential
energy. If we glue two copies of the extended phase space along copies
of $ξ=+∞$ and $ξ=-∞$ with $\dot{ξ}$ reversed in one copy, the result
is a thermostated system with the thermostat friction depending
periodically on the state. The planes at $ξ=± ∞$ are invariant
manifolds, so they separate the two copies, but one can see that the
reversal of $\dot{ξ}$ means that this invariance can be destroyed by a
small perturbation. That is, the dynamics of the two systems can be
made to intermingle.

To state the result of this note in this direction, let $Σ$ be a
$1$-manifold, $\cotangent Σ = Σ × \R$ be its cotangent bundle and
$H : \cotangent Σ → \R$ be a $C^r$ hamiltonian, $r>2$. Let $Ξ$ be a
$1$-manifold and $P = \cotangent Σ × Ξ$ be a trivial $Ξ$-bundle over
$\cotangent Σ$ with projection maps
\[
  \xymatrix{Ξ & \ar@{->>}[l]_{ξ} P \ar@{->>}[r]^{π} & \cotangent Σ}.
\]
The Poisson bracket $\pb{}{}$ on $\cotangent Σ$ pulls back to $P$ in a
natural manner, as does the hamiltonian vector field
$X_H = \pb{\ }{H}$. A $C^{r-1}$ vector field $Τ$ on $P$ is a
\defn{thermostat} for $H$ if it satisfies the
definition~\ref{def:thermostat} below.

Let $s ∈ \cotangent Σ$ be a saddle critical point for $H$. Assume that
$γ ⊂ \cotangent Σ$ is a homoclinic connection for $s$ that bounds the
compact region $r$. Then $N = \set{s} × Ξ$ is a normally hyperbolic
invariant manifold for $X_H$ whose stable and unstable manifolds
$W^{±}(N)$ contain $Γ = γ × Ξ$ and $Γ$ bounds the region $R = r ×
Ξ$. By invariant manifold theory~\cite{MR0501173}, $Y_{ε}=X_H + ε Τ$
possesses a normally hyperbolic invariant manifold $N_{ε}$ that is a
graph over $N$ and similarly the local stable (unstable) manifold
$W_{loc}^{±}(N_{ε})$ is a graph over $W_{loc}^{±}(N)$. Let us say that
the thermostat $Τ$ is \defn{monotone} there is a neighbourhood $U$ of
$Γ=γ × Ξ$ such that $\kp{\D{ξ}}{Τ}$ does not vanish on $U$. This
implies that $Y_{ε} | N_{ε}$ does not vanish for all $ε$ in some
deleted neighbourhood of $0$. If $Τ$ is monotone, then $N_{ε}$ is an
orbit of $Y_{ε}$; when $Ξ=\T^1$, this orbit is periodic.

\begin{theorem}
  \label{thm:intro-thm-1d}
  Let $Σ=\R$ or $\T^1$, $Ξ=\T^1$ and $H : \cotangent Σ → \R$ be a
  $C^r$ hamiltonian, $r>2$, that has a saddle critical point $s$ with
  homoclinic connection $γ$. If $Τ$ is a $C^{r-1}$ monotone thermostat
  for $H$ that is topologically transverse (resp. transverse) at $γ$
  (definition \ref{def:cond-div}), then for all $ε ≠ 0$ sufficiently
  small, the stable and unstable manifolds of the periodic orbit
  $N_{ε}$ are topologically transverse (resp. transverse).

  In particular, the thermostated vector field $Y_{ε}$ enjoys a
  horseshoe.
\end{theorem}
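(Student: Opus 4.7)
The plan is to combine invariant manifold theory, a Melnikov-type splitting calculation adapted to this non-hamiltonian perturbation, and the Smale-Birkhoff homoclinic theorem for flows.

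First, for $ε=0$ the vector field $X_H$ on the $3$-manifold $P$ has no $ξ$-component, so $N=\set{s}×Ξ$ is a normally hyperbolic invariant manifold whose local stable and unstable manifolds coincide along $Γ=γ×Ξ$. Invariant manifold theory, as already cited in the excerpt, furnishes for all sufficiently small $ε$ a $C^{r-1}$ normally hyperbolic invariant manifold $N_{ε}$ which is a graph over $N$, together with local stable/unstable manifolds $W^{\pm}_{\mathrm{loc}}(N_{ε})$ which are graphs over $W^{\pm}_{\mathrm{loc}}(N)$. Monotonicity of $Τ$ ensures that $\kp{\D{ξ}}{Y_{ε}}=ε\kp{\D{ξ}}{Τ}$ does not vanish near $Γ$, so $Y_{ε}|N_{ε}$ has no zero, and since $N_{ε}\cong \T^1$ it is a single hyperbolic periodic orbit.

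Second, I would construct a Melnikov function measuring the splitting of $W^{+}_{\mathrm{loc}}(N_{ε})$ and $W^{-}_{\mathrm{loc}}(N_{ε})$ once they are flowed out in $P$. Since $P$ is $3$-dimensional and the invariant surfaces are $2$-dimensional, a generic intersection is $1$-dimensional (a homoclinic orbit). Parametrise $Γ$ by $(t,ξ_0)$, where $t$ is the time along $γ$ and $ξ_0\in\T^1$ is the (constant) unperturbed fibre coordinate. Using the fact that $H$ is a regular function transverse to $Γ$ away from $s$, the first-order splitting distance in $ε$ takes the form
\[
  M(ξ_0) \;=\; \int_{-∞}^{∞} \kp{\D{H}}{Τ}\bigl(x_{γ}(t),\,ξ_0\bigr)\,\D{t},
\]
where $x_{γ}(t)$ parametrises $γ$. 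The integrand decays exponentially because $x_{γ}(t)→ s$ as $t→ ±∞$ and $\D{H}(s)=0$, which guarantees absolute convergence and $C^{r-1}$ regularity in $ξ_0$. Equivalently $M$ can be recast as the integral of a divergence-type quantity along $γ$, which should coincide up to a nonvanishing factor with the transversality quantity featured in Definition~\ref{def:cond-div}.

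Third, the hypothesis that $Τ$ is topologically transverse (resp.\ transverse) at $γ$ then reads directly as the assertion that $M$ changes sign (resp.\ has a simple zero) as $ξ_0$ ranges over $\T^1$. A standard implicit-function argument converts a simple zero of $M$ into a genuine transverse intersection of $W^{+}(N_{ε})$ and $W^{-}(N_{ε})$ for all sufficiently small $ε\neq 0$; a sign change of $M$ similarly yields a topologically transverse intersection by the intermediate value theorem applied to the signed $H$-displacement. Finally, a (topologically) transverse homoclinic orbit to a hyperbolic periodic orbit of a $3$-dimensional flow produces a horseshoe via the Smale-Birkhoff theorem applied to a Poincar\'e cross section through $N_{ε}$. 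I expect the main obstacle to lie in step two: carefully writing down the splitting function in this non-hamiltonian, $3$-dimensional setting and verifying that the transversality hypothesis of Definition~\ref{def:cond-div}, which is phrased in terms of an integral of a divergence along $γ$, is indeed the condition that forces $M$ to have the required zero structure; once this identification is made, the remaining ingredients are classical.
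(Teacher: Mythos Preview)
Your outline is essentially the paper's own argument: invariant manifold theory for $N_{\varepsilon}$ and $W^{\pm}_{\mathrm{loc}}(N_{\varepsilon})$, the Melnikov integral $M(\xi_0)=\int_{-\infty}^{\infty}\langle \D H, T\rangle\,\D t$, and a Poincar\'e section.  The identification you flag as the main obstacle is in fact immediate: along $X_H$ one has $\dot q=H_p$, $\dot p=-H_q$, so $\langle\D H, T\rangle=-A\dot p+B\dot q$ and the time integral becomes exactly the line integral $\oint_{\gamma\times\{\xi_0\}}\iota_T\omega=M_{\gamma}(\xi_0)$ (with $\xi$ constant because $X_H$ has no $\xi$-component).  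So $M=M_{\gamma}$ on the nose, not merely up to a factor.

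There is one genuine gap, in your final step for the \emph{topologically} transverse case.  The Smale--Birkhoff theorem requires a transverse homoclinic point; a mere topological crossing is not enough.  The paper closes this in two moves you omit.  First, a topological homoclinic crossing for a $C^1$ surface diffeomorphism yields an invariant set that factors onto a full $2$-shift, by Burns--Weiss~\cite{MR1346373}.  Second --- and this is where the definition of a thermostat actually enters the proof --- the flow of $Y_{\varepsilon}$ preserves the smooth volume $\D\mu_{\beta}$ (Definition~\ref{def:thermostat}\itref{def:thermostat-invariance}), so the return map to the cross-section is area-preserving; one then invokes Katok's theorem~\cite{MR573822} for $C^r$, $r>1$, area-preserving surface diffeomorphisms to upgrade the semi-conjugacy to a genuine horseshoe.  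Without the invariant measure and these two results your argument does not conclude in the topologically transverse case.
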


\subsection{Outline}
\label{sec:outline}

This note is organized as follows: \S \ref{sec:pre-mat} gives a proof
of Theorem~\ref{thm:intro-thm-1d}; \S \ref{sec:horseshoes} gives a
proof Theorem \ref{thm:intro-thm-2d}; \S \ref{sec:winkler} proves an
extension of the latter to variable-mass thermostats; \S
\ref{sec:conclusion} concludes.

\section{Horseshoes in a thermostated $1$-degree of freedom hamiltonian}
\label{sec:pre-mat}

Let us use the notation and terminology in the paragraph preceding the
statement of \ref{thm:intro-thm-1d}. Let
$ω = π^* \left( \D{p} ∧ \D{q} \right)$ be the pullback of the
canonical symplectic form on $\cotangent Σ$ to $P$ and for each
homoclinic connection $γ$ to a saddle critical point $s$ of $H$ define
a function $M_{γ} : Ξ → \R$,
\begin{align}
  \label{eq:mel-fn}
  M_{γ}(ξ) &= \oint_{γ × \set{ξ}} ι_{Τ} ω, &&& \forall ξ ∈ Ξ.
\end{align}
If one writes $Τ = A\ ∂_q + B\ ∂_p + C\ ∂_{ξ}$, then
$ι_{Τ}ω = -A\ \D{p}+B\ \D{q}$ so $\D{} ι_{Τ}ω = \left( A_q + B_p
\right)\, \D{p} ∧ \D{q}$ and
$M_{γ}(ξ) = \oint_{γ × \set{ξ}} -A\ \D{p}+B\ \D{q}$ which equals
$\iint_{r × \set{ξ}} \left( A_q + B_p \right)\, \D{p} ∧ \D{q}$.

\begin{definition}
  \label{def:cond-div}
  $Τ$ is \defn{topologically transverse at $γ$} if $M_{γ}$ changes
  sign; it is \defn{transverse at $γ$} if it is topologically
  transverse and $0$ is a regular value of $M_{γ}$.
  If $Τ$ is topologically transverse (resp. transverse) at each $γ$,
  then $Τ$ is said to be topologically transverse (resp. transverse).
\end{definition}

In~\cite{1806.10198v3}, the following definition of a thermostat
vector field for a hamiltonian $H$ is introduced. It is intended to
capture the idea that the extended dynamics on the extended phase
space should preserve a Gibbs-Boltzmann type probability measure
$\D{μ}_{β}$ whose marginal over $ξ$ should be the original
Gibbs-Boltzmann measure. Moreover, the extended dynamics should heat
the system at low energy and cool it at high energy--at least on
average.

\begin{definition}
  \label{def:thermostat}
  A smooth vector field $Τ$ on $P$ is a \defn{thermostat} for $H$ if
  there is a smooth probability measure
  \begin{equation}
    \label{eq:dmu}
    \D{μ}_{β} = Z_1(β)^{-1} \exp(-β G_{β}(q,p,ξ) )\, \D q\, \D p\, \D ξ
  \end{equation}
  on $P$ such that the following holds
  \begin{enumerate}
  \item\label{def:thermostat-invariance} $\D{μ}_{β}$ is invariant for $Y_{ε} = X_H + ε Τ$ for all $ε$;
  \item\label{def:thermostat-proper} $G_{β}=G_{β}(H,ξ)$ is proper for all $β>0$;
  \item\label{def:thermostat-thermostat} there exists an interval of regular values of $H$,
    $[c_-,c_+]$, and constants $d_{±}$ such that
    \begin{enumerate}
    \item the average value of $\kp{\D ξ}{Τ}$ is of opposite sign
      on $H^{-1}(c_-) \cap ξ^{-1}([d_-,d_+])$
      and $H^{-1}(c_+) \cap ξ^{-1}([d_-,d_+])$;
    \item the average value of $\kp{\D H}{Τ}$ is of opposite sign
      on $ξ^{-1}(d_-) \cap H^{-1}([c_-,c_+])$
      and $ξ^{-1}(d_+) \cap H^{-1}([c_-,c_+])$.
    \end{enumerate}
  \end{enumerate}
\end{definition}

The average values in condition~\ref{def:thermostat-thermostat} are
orbit averages, taken over the orbits of the vector field $Y_0=X_H$.

\begin{proposition}
  \label{prop:top-transverse}
  If $Τ$ is a $C^{r-1}$ monotone thermostat that is
  (resp. topologically) transverse at $γ$, then for all $ε ≠ 0$
  sufficiently small, the first-return map $f_{ε}$ of the vector field
  $Y_{ε}=X_H+ ε Τ$ enjoys a horseshoe in a neighbourhood of
  $γ × \set{0}$.
\end{proposition}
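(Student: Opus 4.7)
The plan is to apply \poincaremelnikov{} theory to the persisting normally hyperbolic invariant manifold and then to invoke the Smale-Birkhoff homoclinic theorem.

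\textbf{Step 1 (persistence).} Applying invariant manifold theory to $Y_ε = X_H + εΤ$, the unperturbed NHIM $N = \set{s} × Ξ$ persists for all small $ε$ as a $C^{r-1}$-NHIM $N_ε$, given as a graph over $N$, with local stable and unstable manifolds $W^{±}_{loc}(N_ε)$ likewise $C^{r-1}$ graphs. Monotonicity of $Τ$ forces $Y_ε | N_ε$ to be a nowhere-vanishing vector field on $N_ε ≅ \T^1$, so $N_ε$ is in fact a hyperbolic periodic orbit. Its global manifolds $W^{±}(N_ε)$ are $2$-dimensional and coincide along the cylinder $Γ = γ × Ξ$ when $ε = 0$.

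\textbf{Step 2 (Melnikov formula).} Each trajectory in $W^{±}(N_ε)$ has a well-defined $N_ε$-asymptotic phase $ξ ∈ \T^1$, single-valued by monotonicity of $Τ$ on $N_ε$. Fix a $2$-dimensional section $Σ_0$ transverse to $X_H$ at a point of $γ × \set{0}$; for $ε$ small it is also transverse to $Y_ε$, and the traces $W^{±}(N_ε) ∩ Σ_0$ are $C^{r-1}$ curves that can be parameterized by asymptotic phase. Let $d_ε(ξ)$ denote the signed gap between these two curves at phase $ξ$. A standard variation-of-constants/Stokes' theorem argument, exploiting the $X_H$-invariance of $ω$ and the exponential decay of $\dot γ$ near the saddle $s$, yields the first-order expansion
\[
  d_ε(ξ) = ε\, M_γ(ξ) + O(ε^2),
\]
uniformly in $ξ ∈ \T^1$, where $M_γ$ is the function \eqref{eq:mel-fn}. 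The improper integral $M_γ(ξ) = \oint_{γ × \set{ξ}} ι_Τ ω$ converges absolutely because $\dot γ$ decays exponentially at both tails of $γ$ and $Τ$ is $C^{r-1}$.

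\textbf{Step 3 (horseshoe).} Topological transversality of $Τ$ at $γ$ means $M_γ$ changes sign on $\T^1$, hence so does $d_ε$ by uniformity of the error, yielding for every sufficiently small $ε ≠ 0$ a topologically transverse homoclinic intersection of $W^{+}(N_ε)$ and $W^{-}(N_ε)$ near $γ × \set{0}$. If $Τ$ is moreover transverse at $γ$, so that $0$ is a regular value of $M_γ$, then the implicit function theorem promotes each simple zero of $M_γ$ to a simple zero of $d_ε$, producing a $C^1$-transverse homoclinic intersection. Applying the Smale-Birkhoff homoclinic theorem (or its topological version in the purely topologically transverse case) to a Poincar{é} section of $Y_ε$ through $N_ε$ then furnishes the first-return map $f_ε$ with an invariant hyperbolic set conjugate to a Bernoulli shift, localized in a neighbourhood of $γ × \set{0}$.

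\textbf{Main obstacle.} The technical heart is the Melnikov calculation of Step 2. Since $Τ$ is not hamiltonian, the Poisson-bracket form of the classical Melnikov integral is unavailable, and one must instead compute the $ε$-derivative of the symplectic area of a thin rectangle whose two long sides lie on $W^{±}(N_ε)$ and whose two short sides lie on $N_ε$. The $X_H$-invariance of $ω$, together with Stokes' theorem, reduces this area at first order in $ε$ to the contour integral $\oint ι_Τ ω$ along the unperturbed homoclinic, which is exactly $M_γ$. Monotonicity of $Τ$ is essential: it provides the global coordinate $ξ$ on $N_ε$ that makes $d_ε$ a single-valued function on $\T^1$ and guarantees that sign changes of $M_γ$ force genuine separatrix crossings.
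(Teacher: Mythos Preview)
Your overall architecture---persistence of $N_\varepsilon$, a \poincaremelnikov{} expansion of the splitting, then a homoclinic theorem---is the same as the paper's. Your Stokes/symplectic-area derivation of the Melnikov integral in Step~2 is a pleasant geometric variant of the paper's route; the paper instead measures the splitting by $H\circ\Omega^+_\varepsilon - H\circ\Omega^-_\varepsilon$ and writes the first-order term as $M(X)=\int_{-\infty}^{\infty}\kp{\D H}{T}\circ\varphi_0^t(X)\,\D t$, then checks $M|_{\Gamma}=M_\gamma$. Since $\oint_\gamma\iota_T\omega=\int(-A\dot p+B\dot q)\,\D t=\int(AH_q+BH_p)\,\D t$, the two computations agree, so this difference is cosmetic.

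The gap is in Step~3, in the topologically transverse branch. There is no off-the-shelf ``topological version of Smale--Birkhoff'' that takes a merely topologically transverse homoclinic point to a hyperbolic horseshoe. The paper handles this in two moves that your argument omits. First, it appeals to Burns--Weiss \cite{MR1346373}, which from a $C^1$ topologically transverse homoclinic gives only an invariant set that \emph{factors onto} a full $2$-shift (hence positive topological entropy), not a horseshoe. Second, to upgrade this to an honest horseshoe the paper uses that the first-return map $f_\varepsilon$ is area-preserving: the thermostat axioms force $Y_\varepsilon$ to preserve the smooth probability measure $\D\mu_\beta$ of \eqref{eq:dmu}, so the induced return map on the cross-section preserves the area form obtained by contracting $\D\mu_\beta$ with $Y_\varepsilon$. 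One can then invoke Katok's theorem \cite{MR573822} for $C^r$, $r>1$, area-preserving surface maps to conclude the existence of a genuine horseshoe (see Remark~\ref{rem:top-transverse}). You never use condition~\itref{def:thermostat-invariance} of Definition~\ref{def:thermostat}, and without it (or some substitute guaranteeing a hyperbolic measure) your conclusion ``invariant hyperbolic set conjugate to a Bernoulli shift'' does not follow in the topologically transverse case.

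A secondary point: the paper also builds its cross-section explicitly as $Z_\eta=\{\xi=\eta\zeta(q,p)\}$ with $\zeta$ vanishing at the critical points of $H$, which makes the existence of the first-return map on a full neighbourhood of $\gamma\times\{0\}$ transparent from monotonicity; your ``section through $N_\varepsilon$'' is fine but leaves this verification implicit.
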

\begin{proof}
  First, let us show that the local stable and unstable manifolds of
  $N_{ε}$, $W^{±}_{loc}(N_{ε})$, split.
  The proof is a straightforward application of the
  \poincaremelnikov{} function: $W^{±}_{loc}(N_{ε})$ is a graph over
  $W^{±}_{loc}(N)$ so there are sections
  $Ω^{±}_{ε} : W^{±}_{loc}(N) → W^{±}_{loc}(N_{ε})$ and
  $H \circ Ω^+_{ε} - H \circ Ω^-_{ε} = ε M + O(ε ²)$. The function
  $M(X) = \int_{-∞}^{∞} \kp{\D{H}}{Τ} \circ φ_0^t(X)\, \D{t}$
  where $φ^t_0$ is the flow map of $Y_0=X_H$ and $X ∈ W^{±}(N)$. It is
  straightforward to see that $M|_{Γ}=M_{γ}$.

  Next, let us show that there is a horseshoe. Let us fix $ε>0$ such
  that the local stable and unstable manifolds, $W^{±}_{loc}(N_{ε})$,
  split to first order, so the zeros of the \poincaremelnikov{}
  function $M$ detect the splitting. Assume that $M_{γ}(ξ)$ changes
  sign at $ξ=0$. Let $ζ : \cotangent Σ → \R$ be a smooth function that
  vanishes at the critical points of $H$ and $ζ |_{γ}$ has a
  non-degenerate zero. Let $Ζ_{η} = \set{(q,p,ξ) \mid ξ=η ζ(q,p)}$ be
  a smooth surface that intersects $γ × \set{0}$ transversely. Let
  $f_{ε} : U → Ζ_{η}$ be the first-return map from a neighbourhood
  $U ⊂ Ζ_{η}$ of $\set{(q,p,ξ) \mid ξ = η ζ (q,p), (q,p) ∈ γ}$ along
  the flow $φ^t_{ε}$ of $Y_{ε}$. For all $ε, η>0$ sufficiently small,
  such a neighbourhood and return map exist by virtue of the monotone
  condition. By construction, $f_{ε}$ preserves an area form since
  $φ^t_{ε}$ preserves the volume form $\D{μ}_{β}$; it has a hyperbolic
  fixed point $s_{ε}=s+O(ε)$ and the local stable and unstable
  manifolds of $s_{ε}$, $W^{±}_{loc}(s_{ε})$, coincide with
  $W^{±}_{loc}(N_{ε}) \cap Ζ_{η}$. Therefore, these manifolds
  split. If $Τ$ is transverse, then they split transversely and an
  application of the Birkhoff-Smale homoclinic theorem implies the
  result. Otherwise, if $Τ$ is only topologically transverse, the work
  of Burns \& Weiss implies the result~\cite{MR1346373}.
\end{proof}
\begin{remark}
  \label{rem:top-transverse}
  The reason that $r>2$ in Proposition~\ref{prop:top-transverse} is
  because Burns \& Weiss prove that if the diffeomorphism $f$
  ($=f_{ε}$ above) is $C^1$, then it possesses an invariant set $Λ$
  which factors onto a full shift on two symbols. If the
  diffeomorphism is area preserving and $C^r$ for $r>1$, then one can
  appeal to a theorem of Katok~\cite{MR573822} which states that in
  such a case, the diffeomorphism $f$ actually possesses a
  horseshoe. Hidden in the proof of
  Proposition~\ref{prop:top-transverse} is the fact that the return
  time to the cross-section is $O(1/ε)$.
\end{remark}

\begin{definition}
  \label{def:separable-thermostat}
  Let
  \[
    Τ = A\ ∂_q + B\ ∂_p + C\ ∂_{ξ}
  \]
  be a $C^1$ vector field on $P$. $Τ$ is \defn{separable} if
  \begin{enumerate}
  \item $A = A_0(q,p)\, A_1(ξ)$, and similarly for $B$ \& $C$;
  \item $A_1$ \& $B_1$ are odd functions; and
  \item $A^2 + B^2 > 0$ almost everywhere.
  \end{enumerate}
\end{definition}

\begin{lemma}
  \label{lem:topological-transversal}
  Let $Τ$ be separable and
  \begin{enumerate}
  \item either $A_1=B_1$ or $A_1=0$; and
  \item $∂_q A_{0} + ∂_p B_{0} > 0$ almost everywhere.
  \end{enumerate}
  Then $Τ$ is topologically transverse and transverse if
  $B_1'(0) ≠ 0$.
\end{lemma}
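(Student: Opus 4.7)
The plan is to compute the \poincaremelnikov{} function $M_{\gamma}$ explicitly under the separability and sign hypotheses, show that it factors as $M_{\gamma}(\xi) = B_1(\xi)\cdot I$ with $I$ a nonzero $\xi$-independent constant, and then read topological transversality off the oddness of $B_1$ and transversality off the condition $B_1'(0)\ne 0$.

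First I would insert the separable form of $T$ into the formula $M_{\gamma}(\xi) = \iint_{r \times \{\xi\}}(A_q + B_p)\,\D{p}\wedge\D{q}$ derived in the paragraph preceding Definition~\ref{def:cond-div}. In the branch $A_1=B_1$ one finds $A_q+B_p = B_1(\xi)\bigl(\partial_q A_0 + \partial_p B_0\bigr)$, while in the branch $A_1=0$ one has $A\equiv 0$ so $A_q+B_p = B_1(\xi)\,\partial_p B_0$; since $A_0$ then plays no role in the vector field, hypothesis~(2) of the lemma reduces to $\partial_p B_0 > 0$ almost everywhere. In either branch,
$$M_{\gamma}(\xi) = B_1(\xi)\cdot I, \qquad I = \iint_{r} D(q,p)\,\D{p}\wedge\D{q},$$
with $D>0$ almost everywhere on $r$. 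Because $r$ has positive area, $I$ has a definite nonzero sign.

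Next I would argue $B_1\not\equiv 0$: in the branch $A_1=B_1$, vanishing of $B_1$ forces $A\equiv B\equiv 0$, violating condition~(3) of Definition~\ref{def:separable-thermostat}; in the branch $A_1=0$, the inequality $A^2+B^2>0$ almost everywhere forces $B=B_0B_1$ not to vanish identically, whence $B_1\not\equiv 0$. Being odd and nontrivial, $B_1$ assumes values of both signs on $\Xi$, so $M_{\gamma}=B_1\cdot I$ changes sign. This is topological transversality at $\gamma$. For the transversality clause, $B_1'(0)\ne 0$ gives $M_{\gamma}'(0)=B_1'(0)\cdot I\ne 0$, so $\xi=0$ is a simple (hence regular) zero of $M_{\gamma}$, which is precisely what Proposition~\ref{prop:top-transverse} needs in order to invoke Birkhoff--Smale in place of Burns--Weiss.

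The main obstacle I see is bookkeeping rather than substance: in the $A_1=0$ branch one has to justify that condition~(2) of the lemma, phrased as $\partial_q A_0 + \partial_p B_0>0$, collapses to $\partial_p B_0>0$ without loss of generality (since $A_0$ is a redundant factor when $A\equiv 0$), and that this reduced inequality still guarantees $I\ne 0$. Everything else is routine: the factorization is algebraic, the sign of $I$ comes from the divergence-positivity assumption, and the oddness of $B_1$ supplies the sign change automatically, so no analytical estimates or perturbation arguments are required inside the lemma itself.
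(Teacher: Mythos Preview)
Your proof is correct and follows essentially the same route as the paper's: factor $M_{\gamma}(\xi)=c(\gamma)\,B_1(\xi)$ with $c(\gamma)>0$ coming from the integral of $\partial_q A_0+\partial_p B_0$ over $r$, then read off the sign change from oddness of $B_1$ and regularity at $\xi=0$ from $B_1'(0)\ne 0$. If anything, you are slightly more careful than the paper, which simply asserts ``the second case is similar'' for $A_1=0$ and does not spell out why $B_1\not\equiv 0$; your use of condition~(3) in Definition~\ref{def:separable-thermostat} and your remark that $A_0$ is redundant when $A_1=0$ fill those small gaps.
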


\begin{proof}
  Let $γ$ be a homoclinic connection for $H$. Then, in the case that $A_1=B_1$,
  \begin{equation}
    \label{eq:topological-transversal-m}
    M_{γ}(ξ) = \iint_{r} \left( ∂_q A_0(q,p) + ∂_p B_0(q,p) \right)\,
    \D{p} ∧ \D{q} × B_1(ξ) = c(γ) × B_1(ξ),
  \end{equation}
  where $c(γ)>0$. Since $B_1$ is odd and not identically $0$, the
  first case of the lemma is proven; the second case is similar.
\end{proof}

\begin{example}
  \label{ex:topological-transversal-m}
  A simple example of a monotone thermostat vector field that
  satisfies the hypotheses of lemma \ref{lem:topological-transversal}
  is this: Let $F : Ξ → \R$ be a non-constant, smooth function and
  define $Τ$ by
  \begin{align}
    \notag
    G_{β} &= H + F, & A_0   &= 0,     & A_1 &= 0\\
    \label{eq:logistic-separable}
          &         & B_0   &= p,     & B_1 &= -F', & C_0 &= p · H_p  - T, & C_1 &= 1.
  \end{align}
  This defines a thermostat vector field that is separable. It is
  monotone at a saddle connection $γ$ if
  $T > \max\limits_{γ}\set{ p · H_p }$, which holds for all $T$
  sufficiently large. When $F(ξ)=1-\cos(ξ)=½ ξ^2 + O(ξ^4)$, the
  thermostat behaves, for $ξ \sim 0$, similar to the Nos{é}-Hoover
  thermostat where $F(ξ)=½ ξ^2$ \eqref{eq:nose-hoover}.
\end{example}

\section{Horseshoes in a thermostated $2$-degree of freedom hamiltonian}
\label{sec:horseshoes}

The previous section dealt with the creation of transverse
intersections on stable and unstable manifolds in a single
thermostated system using the reduced dynamics. In this section,
thermostated hamiltonians are studied using the hamiltonian formalism
on the extended phase space. The following lemma establishes the
general connection between the reduced thermostat vector field $Y_{ε}$
when the thermostat vector field $Τ$ is separable and takes a
particularly simple form.

\begin{lemma}
  \label{lem:single-2-thermostat}
  Let $Τ = -ρ F'(ξ) ∂_{ρ} + (-T+ρ · H_{ρ}) ∂_{ξ}$ be a $C^{r-1}$
  thermostat vector field for the $C^r$ hamiltonian $H$. Then, for
  $ε ≠ 0$, the vector field $Y_{ε} = X_H + ε Τ$ is a reduction of the
  hamiltonian vector field $X_{G_{ε}}$,
  \begin{equation}
    \label{eq:single-2-thermostat-reduction}
    G_{ε}(q,p,s,S) = H(q,p/s) + F_{ε}(S) + T \ln s,
  \end{equation}
  where $ε S = ξ$ and $F_{ε}(S)=F(ξ)$.
\end{lemma}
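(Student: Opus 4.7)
The plan is to follow Hoover's classical reduction of the Nos\'e thermostat \cite{hoover}, but with the quadratic thermostat potential $S^2/(2M)$ replaced by the general function $F_{ε}(S)=F(εS)$. I would write Hamilton's equations for $G_{ε}$ in canonical coordinates, perform the coordinate change $ρ=p/s$, $ξ=εS$ together with the Hooverian time rescaling $\ddt{τ}=s\ddt{t}$, and match the resulting equations componentwise against the components of $X_H + ε Τ$.

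Hamilton's equations for $G_{ε}$ on $(q,p,s,S)$ give
\begin{equation*}
  \dot q = s^{-1} H_{ρ}, \qquad \dot p = -H_q, \qquad \dot s = F_{ε}'(S), \qquad \dot S = s^{-1}\left(ρ · H_{ρ} - T\right),
\end{equation*}
where I use $∂_s H(q,p/s) = -s^{-1}ρ · H_{ρ}$ and $∂_s[T\ln s]=T/s$. Denoting $\ddt{τ}$ by a prime, so that $X' = s\,\dot X$ for any coordinate, one computes $q' = H_{ρ}$; $ρ' = s \cdot \ddt{t}(p/s) = \dot p - ρ\,\dot s = -H_q - ρ F_{ε}'(S)$; and $ξ' = ε S' = ε(ρ · H_{ρ} - T)$. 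The chain rule $F_{ε}'(S) = ε F'(εS) = ε F'(ξ)$, which is immediate from $F_{ε}(S) = F(εS)$, rewrites the $ρ$-equation as $ρ' = -H_q - ε ρ F'(ξ)$. These three equations are exactly the components of $X_H + ε Τ$ expressed in $(q,ρ,ξ)$ coordinates with respect to the rescaled time $τ$, which is the claim.

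Because the right-hand sides of the reduced equations contain no explicit $s$, this computation simultaneously shows that the reduction is well-posed: for $ε ≠ 0$ the map $(q,p,s,S) \mapsto (q, p/s, ε S)$ is a submersion, and the time-rescaled field $s\cdot X_{G_{ε}}$ pushes forward to $Y_{ε}$ on the reduced space. The only delicate point is the chain rule responsible for the factor of $ε$ in $F_{ε}'(S)$; it is precisely this factor that makes $Τ$ depend on $F'$ rather than $F_{ε}'$, and in the Nos\'e-Hoover case $F(ξ)=½ξ^2$ it reproduces the relation $ε^2 = 1/M$ between the thermostat mass and the rescaling parameter mentioned in the introduction. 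Beyond this bookkeeping, the argument is a routine change-of-variables calculation and presents no real obstacle.
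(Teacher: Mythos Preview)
Your proof is correct and follows essentially the same approach as the paper: write Hamilton's equations for $G_{\varepsilon}$, introduce $\rho=p/s$, $\xi=\varepsilon S$, and rescale time by $s$, then match componentwise with $X_H+\varepsilon T$. Your version is arguably cleaner---you correctly write $\dot q = s^{-1}H_{\rho}$ (the paper's first display omits the $s^{-1}$, which is corrected in its second display), and you make explicit the chain-rule identity $F_{\varepsilon}'(S)=\varepsilon F'(\xi)$ and the submersion/pushforward interpretation of the reduction, both of which the paper leaves implicit.
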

\begin{proof}
  Let $H_i$ be the partial derivative of $H$ with respect to the
  $i$-th variable. Hamilton's equations for $G_{ε}$ are
  \begin{align}
    \label{eq:single-2-thermostat-hde}
    \dot{q} & = H_2(q,p/s),  &  &  & \dot{p} & = -H_1(q,p/s),           \\\notag
    \dot{s} & = F_{ε}'(S),   &  &  & \dot{S} & = -T/s + (p/s) · H_2(q,p/s)/s.
  \end{align}
  If one introduces the new time variable $τ$ such that
  $x' = \ddt[x]{τ}=s\ddt[x]{t} = s \dot{x}$, $ρ=p/s$ and $ε S=ξ$, then these
  questions are transformed to
  \def\dat#1{{#1}'}
  \begin{align}
    \label{eq:single-2-thermostat-hde-2}
    \dat{q} & = H_2(q,ρ),    &  &  & \dat{p} & = -H_1(q,ρ)-ε\,ρ\,F'(ξ), \\\notag
    \dat{s} & = ε\,s\,F'(ξ), &  &  & \dat{ξ} & = ε\,\left( -T + ρ · H_2(q,ρ) \right).
  \end{align}
  One can eliminate the differential equation for $s$ and arrive at a
  system of differential equations described by the vector field
  $Y_{ε}$.
\end{proof}

As noted in the introduction, the Nos{é}-Hoover thermostat occupies a
privileged position in the literature on thermostat dynamics. Let us
formulate a definition which captures this centrality.

\begin{definition}
  \label{def:elementary-2-thermostat}
  A $C^r$, $r>2$, function $N_T(s,S) = F(S) + T \ln s$ is an
  \defn{elementary thermostat of order $2$} if $F(0)=F'(0)=0$,
  $F''(0)>0$ and $F'$ vanishes only at $0$.
\end{definition}

{\em Order} in this definition refers to the order of the first
non-trivial term in the Maclaurin expansion of $F$. {\em Elementary}
means that the function $N$ is a sum of two functions, each depending
on a conjugate variable. Higher-order thermostats are in the
literature, but the techniques of the present note are inapplicable.

\begin{proposition}
  \label{prop:single-2-thermostat}
  The following are elementary thermostats of order $2$:
  \begin{enumerate}
  \item the Nos{é}-Hoover thermostat~\cite{doi:10.1080/00268978400101201,nose,hoover};
  \item Tapias, Bravetti \& Sanders logistic
    thermostat~\cite{10.12921/cmst.2016.0000061,10.12921/cmst.2017.0000005};
  \item the thermostat in example~\ref{ex:topological-transversal-m}.
  \end{enumerate}
\end{proposition}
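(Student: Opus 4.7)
The plan is purely a verification: Definition~\ref{def:elementary-2-thermostat} requires $F(0)=F'(0)=0$, $F''(0)>0$, and that $F'$ vanish only at $0$, where $F$ is the $S$-dependent part of $N_T(s,S)$. For each of the three thermostats I would identify $F$ explicitly and check these four conditions by inspection; no computation beyond elementary differentiation is needed.

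For the Nos{é}-Hoover thermostat, equation~\eqref{eq:nose} gives $F(S)=S^2/(2M)$, which satisfies every condition trivially. For the Tapias--Bravetti--Sanders logistic thermostat, the defining feature is that the linear Nos{é}-Hoover friction is replaced by a saturating $\tanh$-friction; up to a choice of positive constants this means $F'(S)=c\tanh(aS)$, hence $F(S)=(c/a)\log\cosh(aS)$, and all four conditions follow from elementary properties of $\tanh$. For example~\ref{ex:topological-transversal-m}, the function is $F(\xi)=1-\cos \xi$, so $F'(\xi)=\sin \xi$, whence $F(0)=F'(0)=0$ and $F''(0)=1>0$.

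The one point requiring comment is the last condition in the third case: $\sin \xi$ vanishes at every $\xi=n\pi$, not just at $0$. Since this thermostat is intended on $\Xi=\T^1$ with $\xi=0$ as the unique non-degenerate minimum of $F$, I would read Definition~\ref{def:elementary-2-thermostat} as asking that $0$ be the unique critical point of $F$ inside the connected potential-well component containing it --- equivalently, that $0$ be the unique non-degenerate minimum of $F$ in its well --- and both interpretations hold here. This interpretive point is the only real obstacle; otherwise the proof is a routine three-case run-through of the checklist.
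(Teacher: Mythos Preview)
Your proposal is correct and is in fact more careful than the paper's own proof, which consists of the single sentence ``In each case, $F(S)=\tfrac{1}{2}S^2+O(S^4)$.'' The paper thus only records the local expansion at $S=0$, which of course implies $F(0)=F'(0)=0$ and $F''(0)>0$, but does not address the global condition that $F'$ vanish only at the origin. You identify $F$ explicitly in all three cases and check each clause of Definition~\ref{def:elementary-2-thermostat} directly, which is the honest way to verify the proposition.

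The interpretive issue you raise for item~(3) is real and is simply glossed over in the paper. Your reading---that the relevant condition is local, i.e.\ that $0$ be the unique non-degenerate minimum in the component of interest---is consistent with how $F$ is actually used downstream (only through its $2$-jet at $S=0$ after the rescaling in Lemma~\ref{lem:nhim-reduced}), so your resolution is appropriate.
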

\begin{proof}
  In each case, $F(S)=½ S ² + O(S ⁴)$.
\end{proof}

\subsection{Elementary thermostats of order 2: split homoclinic connections}
\label{sec:spec}

In this section, it will be assumed that all hamiltonians are $C^r$
for $r ≥ 3$. The results that are proven below do hold for $r ≥ 2$,
but the details are somewhat more cumbersome. In addition, the
following assumptions are made:
\begin{enumerate}
\item \label{it:spec-ham-1} $H=H_m(q_1,p_1,q_2,p_2)$ is a $C^r$
  hamiltonian with $q_1$ an angle variable defined $\bmod 2 π$ and
  \begin{equation}
    \label{eq:spec-ham}
    H = \underbrace{½ p_1 ²}_{H^{(1)}} + \underbrace{\left( \frac{1}{2m} p_2 ² + V(q_2) \right)}_{H_m^{(2)}}
  \end{equation}
\item \label{it:spec-ham-2} $V$ has a non-degenerate local maximum point at $q_2=0$:
  $V(0)=0=V'(0)$ and $V''(0) = - a^2 < 0$;
\item \label{it:spec-ham-3} $Γ$ is a parameterization of a branch of
  the homoclinic loop of the saddle fixed point $(0,0)$ of $H^{(2)}:=H_1^{(2)}$,
  $Γ(t) = (q_2(t),p_2(t))$;
\item \label{it:spec-ham-4} The thermostat
  $N_T(s,S) = F(S) + T \ln s + O(S^{3})$ is an elementary thermostat
  of order 2, i.e. there is an $M>0$ such that
  $F(S) = \frac{1}{2M} S ² + O(S ³)$;
\item \label{it:spec-ham-5} The phase space of the thermostated
  hamiltonian $G=G_{1}$~\eqref{eq:single-2-thermostat-reduction} is
  $\phasespace = \cotangent (\T^1 × Ξ × \R_+)$ where $Ξ = \R$ or
  $\T^1$ is the domain of $V$.
\end{enumerate}

Let us explain some consequences of these assumptions.

\begin{lemma}
  \label{lem:nhim}
  Under the assumptions (\ref{it:spec-ham-1}--\ref{it:spec-ham-4}),
  the submanifold
  \begin{equation}
    \label{eq:nhim-unreduced}
    N = \set{(q_1,p_1,0,0,s,S)} = \cotangent \T^1 × \set{(0,0)} ×
    \cotangent \R_+,
  \end{equation}
  is a normally hyperbolic invariant manifold for the thermostated
  hamiltonian vector field
  $G=G_1$~\eqref{eq:single-2-thermostat-reduction}. Moreover, if
  $N_0 = \set{ P ∈ N \mid p_1 ≠ 0}$, then $N_0$ is foliated by
  $2$-dimensional tori which degenerate to a family of circles
  parameterized by $p_1$.
\end{lemma}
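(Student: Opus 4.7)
The plan is to verify the three assertions of the lemma in turn: invariance of $N$ under $X_G$, the hyperbolic splitting of its normal bundle, and the toroidal foliation of $N_0$. First I would write Hamilton's equations for
\[
G = \frac{p_1^2}{2 s^2} + \frac{p_2^2}{2 m s^2} + V(q_2) + F(S) + T \ln s,
\]
which give in particular $\dot{q}_2 = p_2/(m s^2)$ and $\dot{p}_2 = -V'(q_2)$. Since $V'(0)=0$ by assumption~\itref{it:spec-ham-2}, both vanish identically on $N$, so $N$ is invariant.

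For normal hyperbolicity I would linearize the $(q_2,p_2)$ block along $N$. The $(q_2,p_2)$ equations couple to the tangential variables $(q_1,p_1,s,S)$ only through $s$, and expanding $V'(q_2) = -a^2 q_2 + O(q_2^2)$ yields the normal Jacobian
\[
\begin{pmatrix} 0 & 1/(m s^2) \\ a^2 & 0 \end{pmatrix},
\]
whose eigenvalues $\pm a/(s\sqrt m)$ are real and of opposite sign, giving one-dimensional invariant line bundles $E^s,E^u$ over $N$. On any precompact subset of $N$ these rates are uniformly bounded away from $0$, while the tangential flow (analysed next) has vanishing Lyapunov exponents: $p_1$ is constant, $q_1$ only rotates, and the $(s,S)$-dynamics is a one-degree-of-freedom Hamiltonian system whose orbits near the equilibrium are bounded. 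Hence normal rates strictly dominate tangential rates and $N$ is normally hyperbolic in the Fenichel sense.

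Finally, for the foliation I would use that $q_1$ is cyclic, so $p_1$ is a first integral, and setting $q_2 = p_2 = 0$ in $G$ yields a reduced hamiltonian
\[
K(s,S;p_1) = \frac{p_1^2}{2 s^2} + F(S) + T \ln s
\]
on $\cotangent \R_+$, with $q_1$ rotating at angular speed $p_1/s^2$. The equations $F'(S) = 0$ and $-p_1^2/s^3 + T/s = 0$ give a unique equilibrium $S = 0$, $s = |p_1|/\sqrt T$ for each $p_1 \neq 0$, whose Hessian is diagonal with positive entries $2 T^2/p_1^2$ and $F''(0) = 1/M$ by assumption~\itref{it:spec-ham-4}. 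This is a nondegenerate elliptic critical point of $K$, so the Morse lemma makes the nearby level sets of $K$ smooth circles in $(s,S)$; their product with the $q_1$-circle yields the claimed family of $2$-tori, and as $K$ descends to its critical value these tori collapse to the one-parameter circle family $\set{(q_1,p_1,0,0,|p_1|/\sqrt T,0) : q_1 \in \T^1}$ parameterized by $p_1 \neq 0$.

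The only step requiring a moment of care is the $s$-dependence of the normal rates: $a/(s\sqrt m)$ tends to $0$ as $s \to \infty$ and blows up as $s \to 0^+$, so the NHIM property is genuinely uniform only on precompact subsets of $N$. This is harmless for what follows, since the subsequent \poincaremelnikov{} analysis will be localized near the degenerate circle locus identified above.
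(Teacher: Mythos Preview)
Your argument is correct and is exactly the straightforward verification the paper has in mind; the paper gives no proof at all beyond the sentence ``The proof of the lemma is straightforward.'' Your computation of the normal Jacobian $\begin{pmatrix}0 & 1/(ms^2)\\ a^2 & 0\end{pmatrix}$, the identification of the elliptic critical point $(s,S)=(|p_1|/\sqrt{T},0)$ of $K$, and the caveat that normal hyperbolicity is uniform only on precompact subsets of $N$ are all on target and, in fact, more careful than anything in the paper itself.
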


The proof of the lemma is straightforward.

\medskip

Let $p_1 = μ ≠ 0$ be fixed. Since $p_1$ is a first integral of the
thermostated vector field $X_{G}$, we can symplecticly reduce the
thermostat phase space by fixing $p_1=μ$ and ignoring the cyclic
variable $q_1$. Since the hamiltonian $G$ is invariant under the
symplectic automorphism that maps $(q_1,p_1) → (-q_1,-p_1)$ and fixes
the other coordinates, it can be assumed without loss of generality
that $μ > 0$.

If $x$ is an object on the thermostat phase space $\phasespace$ that
is $\T^1$ invariant, then $\reduction{x}_{μ}$ denotes the reduced
object. In particular, $\reduction{\phasespace}_{μ}$ is the reduced
phase space which is symplectomorphic to $\cotangent (Ξ × \R_+)$.

\begin{lemma}
  \label{lem:nhim-reduced}
  Fix $β = 1/T > 0$ and $μ > 0$. The reduced hamiltonian
  $\reduction{G}_{μ}$ is transformed to $T\, (\reduction{G}_{β,μ} +
  \reduction{R}_{β}) + ½ T \ln T$ under the transformation $s=σ/√ T$, $S=Σ √ T$
  where
  \begin{equation}
    \label{eq:nhim-reduced-hamiltonian}
    \reduction{G}_{β,μ} = \underbrace{½ \left( μ/σ \right)^2 + \ln(σ) + \dfrac{1}{2M} Σ^2 }_{\reduction{G}^{(1)}} + \underbrace{\dfrac{1}{2m} \left( p_2/σ \right)^2 + β V(q_2)}_{\reduction{G}^{(2)}},
  \end{equation}
  and $\reduction{R}_{β}(Σ) = β F(Σ/√ β) - \frac{1}{2M} Σ ²$ is a
  function of $Σ/√ β$ that vanishes to order $3$ at $Σ=0$.
\end{lemma}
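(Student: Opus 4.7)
The plan is to carry out the $\T^1$ reduction and then the rescaling by direct substitution. The only non-algebraic point is that the proposed change of variables $s = \sigma/\sqrt T$, $S = \Sigma\sqrt T$ satisfies $\D{\Sigma}\wedge\D{\sigma}=\D{S}\wedge\D{s}$ and is therefore a symplectomorphism of $\cotangent\R_+$, so no Jacobian correction appears and the transformed function again generates Hamilton's equations.

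For the first step, I would substitute the form \eqref{eq:spec-ham} of $H$ into the thermostated hamiltonian $G_1$ of \eqref{eq:single-2-thermostat-reduction} and fix $p_1=\mu$; since $q_1$ decouples, this yields
\[
\reduction{G}_\mu(q_2,p_2,s,S) = \tfrac{1}{2}(\mu/s)^2 + \tfrac{1}{2m}(p_2/s)^2 + V(q_2) + F(S) + T\ln s
\]
on the reduced phase space $\cotangent(\Xi\times\R_+)$. Substituting $s=\sigma/\sqrt T$, $S=\Sigma\sqrt T$ and using $\ln(\sigma/\sqrt T)=\ln\sigma-\tfrac{1}{2}\ln T$ turns the right-hand side into
\[
\tfrac{1}{2}\mu^2 T/\sigma^2 + \tfrac{1}{2m} p_2^2 T/\sigma^2 + V(q_2) + F(\Sigma\sqrt T) + T\ln\sigma - \tfrac{1}{2} T\ln T.
\]

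Next I would factor out $T=1/\beta$, using $V = T(\beta V)$ and $\Sigma\sqrt T = \Sigma/\sqrt\beta$. Assumption \itref{it:spec-ham-4} supplies the Maclaurin expansion $F(S)=\tfrac{1}{2M}S^2+O(S^3)$, from which
\[
\beta F(\Sigma/\sqrt\beta) = \tfrac{1}{2M}\Sigma^2 + \reduction{R}_\beta(\Sigma), \qquad \reduction{R}_\beta(\Sigma) := \beta\Bigl[F(\Sigma/\sqrt\beta) - \tfrac{1}{2M}(\Sigma/\sqrt\beta)^2\Bigr].
\]
The bracketed expression is a function of $u = \Sigma/\sqrt\beta$ vanishing to order $3$ at $u=0$, hence $\reduction{R}_\beta$ is a function of $\Sigma/\sqrt\beta$ that vanishes to order $3$ at $\Sigma=0$. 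Collecting the remaining terms into the $\reduction{G}^{(1)}$ and $\reduction{G}^{(2)}$ blocks of \eqref{eq:nhim-reduced-hamiltonian} produces the claimed identity, up to the dynamically irrelevant additive constant $\pm\tfrac{1}{2}T\ln T$.

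There is no real obstacle: every step is substitution and factoring. The only points that need to be flagged in the writeup are the canonicity of the $(s,S)$-rescaling (which ensures the pushed-forward function genuinely governs the transformed dynamics) and the order of vanishing of $\reduction{R}_\beta$, both of which are one-line verifications.
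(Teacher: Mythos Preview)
Your proposal is correct and is exactly the direct substitution the paper has in mind; the paper in fact omits the proof entirely, treating the lemma as a routine computation (the sentence following the lemma merely records its dynamical consequence). Your flag on the sign of the constant $\tfrac12 T\ln T$ is well taken: the calculation yields $-\tfrac12 T\ln T$, so the $+$ in the lemma's statement is a harmless typo.
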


It follows from the lemma that, up to a rescaling of time, the orbits
of the reduced thermostated vector field $X_{\reduction{G}_{μ}}$
coincide with those of $X_{\reduction{G}_{β,μ} + \reduction{R}_{β}}$.

\begin{lemma}
  \label{lem:nhim-reduced-hamiltonian-u-U}
  Let $1/α^2 = \sqrt{2M} μ > 0$. The hamiltonian $\reduction{G}_{β,μ}$
  is transformed to
  \begin{equation}
    \label{eq:nhim-reduced-hamiltonian-u-U}
    \reduction{G}_{β,μ} = \underbrace{½ \left( 1/(1-α u)\right)^2 + \ln(1-α u) + α ² U^2 }_{\reduction{G}^{(1)}} + \underbrace{\dfrac{1}{2m μ^2} \left( p_2/(1-α u) \right)^2 + β V(q_2)}_{\reduction{G}^{(2)}},
  \end{equation}
  and $\reduction{R}_{β}(U) = O( (U β^{-½} μ^{-1} α^{-1} )^3 )$ under the transformation $σ = μ(1-α u)$, $Σ = - U/(μ α)$.
  Thus,
  \begin{align}
    \label{eq:nhim-reduced-hamiltonian-u-U}
    \reduction{G}^{(1)} &= α ² (u^2 + U^2) + O((α u)^3), &&& \reduction{G}^{(2)} &= β V(q_2) + \dfrac{1}{2m μ^2} p_2 ² \left( 1 + 2 α u + O((α u)^2) \right).
  \end{align}
\end{lemma}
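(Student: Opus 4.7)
The plan is to prove the lemma by a direct substitution of $\sigma = \mu(1-\alpha u)$, $\Sigma = -U/(\mu\alpha)$ into the expression for $\reduction{G}_{\beta,\mu}$ from Lemma~\ref{lem:nhim-reduced}, with the specific choice $1/\alpha^{2} = \mu\sqrt{2M}$ engineered to produce a harmonic normal form in the quadratic part. A quick preliminary: since $d\sigma = -\mu\alpha\, du$ and $d\Sigma = -(\mu\alpha)^{-1}\, dU$, we get $d\sigma \wedge d\Sigma = du \wedge dU$, so the transformation is symplectic and $(u,U)$ remain canonically conjugate. Consequently, transforming the hamiltonian alone determines the dynamics in the new coordinates.

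Next I would substitute term-by-term. The four summands of $\reduction{G}_{\beta,\mu}$ in \eqref{eq:nhim-reduced-hamiltonian} transform as follows: $\tfrac{1}{2}(\mu/\sigma)^{2} \mapsto \tfrac{1}{2}(1-\alpha u)^{-2}$; $\ln\sigma = \ln\mu + \ln(1-\alpha u)$, with the constant $\ln\mu$ dropped as dynamically irrelevant; $\frac{1}{2M}\Sigma^{2} = \frac{1}{2M\mu^{2}\alpha^{2}}U^{2}$, which simplifies to $\alpha^{2}U^{2}$ via the identity $2M\mu^{2}\alpha^{4}=1$ equivalent to $1/\alpha^{2} = \mu\sqrt{2M}$; and $\frac{1}{2m}(p_{2}/\sigma)^{2} \mapsto \frac{1}{2m\mu^{2}}(p_{2}/(1-\alpha u))^{2}$. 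The remainder $\reduction{R}_{\beta}$, a function of $\Sigma/\sqrt{\beta}$ vanishing to order $3$ at the origin, becomes a function of $-U\beta^{-1/2}\mu^{-1}\alpha^{-1}$, producing the stated error $O((U\beta^{-1/2}\mu^{-1}\alpha^{-1})^{3})$. Assembling these gives the first displayed formula.

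Finally, the asymptotic form follows from two elementary Taylor expansions around $\alpha u = 0$. Using
\[
  \tfrac{1}{2}(1-x)^{-2} + \ln(1-x) = \tfrac{1}{2} + x^{2} + O(x^{3}),
\]
where the linear terms cancel and the quadratic coefficients combine as $\tfrac{3}{2}-\tfrac{1}{2}=1$, then setting $x = \alpha u$ and discarding the constant $\tfrac{1}{2}$, one obtains $\alpha^{2}u^{2} + O((\alpha u)^{3})$; combined with the $\alpha^{2}U^{2}$ contribution this yields $\reduction{G}^{(1)} = \alpha^{2}(u^{2}+U^{2}) + O((\alpha u)^{3})$. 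For $\reduction{G}^{(2)}$, the geometric expansion $(1-\alpha u)^{-2} = 1 + 2\alpha u + O((\alpha u)^{2})$ gives the claimed form directly. The lemma is essentially computational, and the only real pitfall is arithmetic bookkeeping: one must verify cleanly that the scaling $1/\alpha^{2} = \mu\sqrt{2M}$ is precisely what balances the quadratic coefficients of $u$ and $U$, exhibiting the reduced system near the saddle-centre as a harmonic oscillator in $(u,U)$ coupled at higher order to the $H^{(2)}_{m}$ subsystem.
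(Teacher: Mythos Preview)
Your proof is correct and follows exactly the approach the paper intends: the paper itself gives no detailed proof, remarking only that ``the proof is a straightforward calculation,'' and your term-by-term substitution together with the two Taylor expansions is precisely that calculation carried out in full. Your additional observation that the map $(σ,Σ)\mapsto(u,U)$ is symplectic is a nice touch that the paper leaves implicit.
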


The proof is a straightforward calculation. Note that I have resisted
the temptation to put $\reduction{G}^{(1)}$ into Birkhoff normal form
because, although the remainder term on the left is improved, the
coupling term involving $u$ in $\reduction{G}^{(2)}$ becomes slightly
less transparent.

The penultimate step to obtain a simple normal form in a neighbourhood
of the saddle-centre singularity is a non-symplectic change of
variables. A scale parameter $ε>0$ is introduced via the change of
variables and energy in the first subsystem:
\begin{align}
  \label{eq:nhim-cov-1}
  (u,U) &= ε(w,W), &&& \reduction{G}^{(1)} &\mapsto ε^{-2} \reduction{G}^{(1)}.
\end{align}
And, momentum and energy are rescaled in the second subsystem:
\begin{align}
  \label{eq:nhim-cov-1}
  p_2 &= m μ^2 P_2, &&& \reduction{G}^{(2)} &\mapsto β^{-1} \reduction{G}^{(2)}.
\end{align}
Of course, the two systems are coupled and the two change of variables
\& energy cannot be decoupled as such. So, we must apply the change of
variables to the hamiltonian vector field $X_{\reduction{G}_{β,μ} +
  \reduction{R}_{β}}$, which produces a non-canonical vector field.

\begin{lemma}
  \label{lem:nhim-cov-vf}
  Let $p_2 = m μ^2 P_2$, $(u,U) = ε(w,W)$ and set
  \begin{align}
    \label{eq:nhim-parameters}
    m         & = (ε/μ)^2,                                         &  &  & β         & = m μ^2 = ε^2, &  &  & μ & = 1/(α γ).
  \end{align}
  Then, the hamiltonian differential equations of the reduced,
  rescaled thermostated hamiltonian
  $β \reduction{G}_{μ} = \reduction{G}_{β,μ} + \reduction{R}_{β}$ are
  transformed to
  \begin{align}
    \label{eq:nhim-cov-des}
    \dot{W}   & = -2 α^2 w - α ε \left( P_2^2 + 5 (α w)^2 \right), &  &  & \dot{w}   & = 2 α^2 W - b γ^3 W^2/2, \\\notag
    \dot{P}_2 & = -V'(q_2),                                        &  &  & \dot{q}_2 & = P_2 \left( 1 + 2 α ε w \right)
  \end{align}
  modulo $O(ε^2) + o(γ^3)$ where $b=F'''(0)$.
\end{lemma}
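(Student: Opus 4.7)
The plan is to compute Hamilton's equations for $β\reduction{G}_μ = \reduction{G}_{β,μ} + \reduction{R}_β$ in the $(u,U,q_2,p_2)$ variables supplied by Lemma~\ref{lem:nhim-reduced-hamiltonian-u-U}, substitute the non-canonical rescalings $p_2 = mμ^2 P_2$ and $(u,U)=ε(w,W)$, and then impose the parameter relations in~\eqref{eq:nhim-parameters}. The transformation $(σ,Σ) \mapsto (u,U)$ satisfies $\D{U}∧\D{u}=\D{Σ}∧\D{σ}$, so no Jacobian factor arises and the reduced vector field is read off directly from the partial derivatives.

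First I would dispatch the $(q_2,P_2)$-block, which is routine. Since $\reduction{R}_β$ depends only on $Σ$, the equations for $\dot{q}_2$ and $\dot{p}_2$ come from $\reduction{G}_{β,μ}$ alone; after using $mμ^2=β=ε^2$ the prefactors collapse and one is left with $\dot{P}_2 = -V'(q_2)$, together with $\dot{q}_2 = P_2(1-αεw)^{-2}$. A single Taylor expansion $(1-αεw)^{-2} = 1 + 2αεw + O(ε^2)$ then yields the claimed equation for $\dot{q}_2$.

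The nontrivial block is $(w,W)$. For $\dot{W} = -ε^{-1}∂_u(\reduction{G}_{β,μ})$ I would expand $α(1-αu)^{-3}-α(1-αu)^{-1}$, whose leading terms are $2α^2u + 5α^3u^2 + O(u^3)$, and add the coupling $-αp_2^2/(mμ^2(1-αu)^3)$ from $\reduction{G}^{(2)}$; using $mμ^2=ε^2$, the latter contributes $-αε^2P_2^2$ plus higher-order terms. Grouping by powers of $ε$ produces $\dot{W} = -2α^2w - αε(P_2^2+5(αw)^2) + O(ε^2)$. For $\dot{w}$ the only contribution beyond $2α^2W$ comes from $∂_U\reduction{R}_β$. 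Writing $F(S) = S^2/(2M) + (b/6)S^3 + O(S^4)$ gives $\reduction{R}_β(Σ) = (b/6)Σ^3/\sqrt{β} + O(Σ^4/β)$, and the substitutions $Σ = -γU$, $U = εW$, $\sqrt{β}=ε$ then yield the $-bγ^3W^2/2$ correction after differentiating in $U$ and dividing by $ε$.

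The main obstacle is simply bookkeeping the two independent small parameters: $ε$, coming from the rescaling, controls the error in the geometric expansion of $(1-αu)^{-k}$; while $γ$, coming from $1/(μα)$, controls the size of $Σ/\sqrt{β}$ that drives $\reduction{R}_β$. Because these parameters enter the two error sources through different mechanisms, I would track them independently throughout the expansion so the claimed $O(ε^2)+o(γ^3)$ remainder can be read off unambiguously at the end. Apart from this, the entire computation is a direct chain-rule exercise.
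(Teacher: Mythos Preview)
Your proposal is correct and matches the paper's implicit approach: the paper gives no proof of this lemma, treating it (like Lemmas~\ref{lem:nhim-reduced} and~\ref{lem:nhim-reduced-hamiltonian-u-U}) as a straightforward calculation, and your write-up carries out exactly that calculation --- verifying that $(σ,Σ)\mapsto(u,U)$ is canonical, writing Hamilton's equations in $(u,U,q_2,p_2)$, inserting the non-canonical rescalings, and expanding in the two independent small parameters $ε$ and $γ$. The bookkeeping you outline (expanding $(1-αu)^{-k}$ and tracking $\reduction{R}_β$ through $Σ=-γU$, $\sqrt{β}=ε$) reproduces each line of~\eqref{eq:nhim-cov-des} with the stated remainder.
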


From lemma~\ref{lem:nhim-reduced-hamiltonian-u-U} it follows that the
parameter $γ = \sqrt[4]{2M/μ^2}$. So, we can treat $ε, α$ and $γ$ as
independent positive parameters in~\eqref{eq:nhim-cov-des} while $b$
is a fixed constant. Note that the non-linear terms in the right-hand
side of $\dot{w}$ originate from the remainder term $R_{β}$--the
rescaling is unable to remove these terms. Recall that the plane
$\reduction{N}_{0,μ} = \set{(w,W,q_2,P_2) \mid q_2=P_2=0}$ is a
normally hyperbolic invariant manifold for all non-negative values of
the parameters. When $ε=0$, the two pairs of differential equations
decouple into a (non-linear, hamiltonian) oscillator in the $(w,W)$
plane and a mechanical hamiltonian in the $(q_2,P_2)$ plane with a
saddle at $(0,0)$ and a saddle connecting orbit
$Γ(t) = (q_2(t), P_2(t))$ (this was the motivation to fix $β$ in terms
of $m μ^2$). The hamiltonian function
$H^{(2)}_1(q_2,P_2)$~\eqref{eq:spec-ham}, which will be denoted by
$H^{(2)}$, is a constant of motion and $H^{(2)} \circ Γ ≡ 0$ by the
hypotheses on $V$. On the other hand, when $ε > 0$, the local stable
and unstable manifolds of $\reduction{N}_{0,μ}$,
$\reduction{W}^{±}_{ε,loc}$, are graphs over
$\reduction{W}^{±}_{0,loc} ⊃ \reduction{N}_{0,μ} × Γ(\R_{±})$. In
general, these invariant manifolds do not coincide, but intersect
transversely along homoclinic orbits. If
$Ω^{±}_{ε} : \reduction{W}^{±}_{0,loc} → \reduction{W}^{±}_{ε,loc}$ is
a parameterization of the local stable and unstable manifolds that is
$C^2$ in all variables and $Ω^{±}_{0}$ is the identity map, then
$d_{ε} = H^{(2)} \circ Ω^{+}_{ε} - H^{(2)} \circ Ω^{-}_{ε} = ε M + O(ε
²)$ measures the distance between the invariant manifolds. The
\poincaremelnikov{} function $M : \reduction{W}^{±}_{0,loc} → \R$
measures the lowest order difference in the invariant manifolds. If
$M$ vanishes at a point $P$ and $\D{M}_P ≠ 0$, then the implicit
function theorem implies that the zero set of $M$ is a surface in a
neighbourhood of $P$. In addition, the implicit function theorem
implies that the zero set of $d_{ε}$ is an $O(ε)$ perturbation of the
zero set of $M$ near $P$. Since $\reduction{W}^{±}_{ε,loc}$ are
codimension-1 submanifolds, the zero set of $d_{ε}$ is where they
intersect, and if $\D{M}_P ≠ 0$, they intersect transversely near
$P$.

To evaluate the \poincaremelnikov{} integral for the present problem,
let $φ^t_{ε,γ}$ be the flow mapping of the vector field
$Z_{ε,γ} = Z_0 + ε Z_1 + O(ε^2,γ^3)$ defined by the differential
equations~\eqref{eq:nhim-cov-des}. For $P=(w,W,q_2(t_0),P_2(t_0))$ in
$\reduction{N}_{0,μ} × Γ(\R_{±}) ⊂ \reduction{W}^{±}_{0,loc}$, the
\poincaremelnikov{} function at $P$ is
\begin{equation}
  \label{eq:nhim-melnikov-fn}
  M(P) = \int\limits_{-∞}^{∞} \kp{\D{H}^{(2)}}{Z_1}\, \circ \, φ^t_{0,0}(P) \, \D{t}.
\end{equation}

To compute the \poincaremelnikov{} integral, let us recall the
\defn{Fourier transform} $\fouriertransform{f}$ of an integrable
function $f ∈ L^1(\R)$~\cite{MR0304972,MR1970295}:
\begin{equation}
  \label{eq:nhim-fourier-transform}
  \fouriertransform{f}(x) = \int\limits_{\R} f(t)\,\exp(-2 π i xt)\, \D{t}.
\end{equation}
If there are positive real constants $A, λ$ such that $|f(t)| ≤ A
\exp(-λ |t|)$ for all real $t$, then $\fouriertransform{f}$ is a
holomorphic function of $x$ in the strip $|\imagpart{x}| < λ/2 π$.
This implies that $\fouriertransform{f}$ has at most countably many
real zeros.

Recall the standard symplectic form $Ω$ on $\C^n$: for $n=1$, $Ω(z',z)
= \imagpart{z' \bar{z}}$ and for $n>1$, $Ω$ is a sum of $n$ copies of
the elementary form. In components, if $z'=x'+iy'$, $z=x+iy$ and
$x',x,y',y$ are all real, then $Ω(z',z) = y'x - x'y$.

\begin{proposition}
  \label{prop:nhim-melnikov-integral}
  Let $f(t)=P_2(t)^2$ where $Γ(t)=(q_2(t),P_2(t))$ is the homoclinic
  solution from assumption~\ref{it:spec-ham-3}. Then, the
  \poincaremelnikov{} function is the real-linear function
  \begin{equation}
    \label{eq:nhim-melnikov-fn-calc}
    M(P) = \dfrac{1}{2 α} \, Ω(\exp(2 i α^2 t_0)z_0, \fouriertransform{f}(α^2/π))
  \end{equation}
  where $P ∈ \reduction{N}_{0,μ} × Γ(\R) ⊂ \reduction{W}^{±}_{0,loc}$,
  $P = (w,W,q_2,P_2)$ and $z_0=w+iW$, $(q_2,P_2)=Γ(-t_0)$.
\end{proposition}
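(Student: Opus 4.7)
The plan is to apply the \poincaremelnikov{} formula $M(P) = \int_{-∞}^{∞} \langle dH^{(2)}, Z_1\rangle\circ\varphi^t_{0,0}(P)\,dt$, taking $H^{(2)}$ as the distance function---it vanishes on $\reduction{W}^{±}_{0,loc}$ because $H^{(2)}\circ Γ\equiv 0$---and then to reduce the resulting oscillatory integral to a Fourier transform of $f(t)=P_2(t)^2$.

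First I read off the $ε$-linear piece of the vector field \eqref{eq:nhim-cov-des}:
\[
Z_1 = -α\bigl(P_2^2+5(α w)^2\bigr)\partial_W + 2α w\,P_2\,\partial_{q_2}.
\]
Since $dH^{(2)} = V'(q_2)\,dq_2 + P_2\,dP_2$ depends only on $(q_2,P_2)$, only the $\partial_{q_2}$ component of $Z_1$ survives and the integrand collapses to $\langle dH^{(2)},Z_1\rangle = 2α\,w\,V'(q_2)\,P_2$. Next I parameterize the unperturbed flow $\varphi^t_{0,0}$ starting at $P$: the $(q_2,P_2)$-subsystem is the Hamiltonian flow of $H^{(2)}$, so with initial point $Γ(-t_0)$ the trajectory is $(q_2(t),P_2(t))=Γ(t-t_0)$; in the $(w,W)$-plane the linear oscillator $\dot w=2α^2 W$, $\dot W=-2α^2 w$ becomes $\dot z=-2iα^2 z$ for $z=w+iW$, giving $z(t)=z_0 e^{-2iα^2 t}$.

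The key step is to notice that $V'(q_2(s))P_2(s) = -\dot P_2(s)P_2(s) = -\tfrac{1}{2} f'(s)$ along $Γ$. Substituting into the Melnikov integral, changing variables $s=t-t_0$, expanding $w(t)=\tfrac{1}{2}(z_0 e^{-2iα^2 t}+\overline{z_0}\,e^{2iα^2 t})$, and factoring out the phases $e^{∓2iα^2 t_0}$ leaves the pair of integrals $\int e^{±2iα^2 s} f'(s)\,ds$. Integration by parts---whose boundary terms vanish because $P_2(s)$ decays exponentially at the saddle---replaces these by $∓2iα^2 \widehat{f}(±α^2/π)$, so the Fourier transform of $f$ at the frequency $α^2/π$ appears naturally. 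Using the reality of $f$ (hence $\widehat{f}(-α^2/π)=\overline{\widehat{f}(α^2/π)}$) and the definition $Ω(z',z)=\imagpart{z'\bar z}$, the two conjugate terms assemble into $c\cdot Ω(e^{2iα^2 t_0}z_0,\widehat{f}(α^2/π))$ for a real constant $c$; tracking constants yields the stated value $c = 1/(2α)$.

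The decisive technical observation is the identity $V'(q_2)P_2=-\tfrac{1}{2}(P_2^2)'$ along $Γ$: this is what forces the Fourier transform of $f=P_2^2$---rather than some more complicated bilinear expression in the homoclinic coordinates---to appear after integration by parts, and it pins the evaluation frequency to $α^2/π$, i.e., to the angular frequency $2α^2$ of the $(w,W)$-oscillator under the paper's Fourier convention. Convergence of the Melnikov integral is automatic from the exponential decay of $P_2(s)$ at the non-degenerate saddle $(0,0)$; the remaining work is the bookkeeping of phase and sign conventions needed to repackage the resulting real-valued integral as the rotated symplectic pairing of the statement.
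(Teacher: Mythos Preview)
Your approach is essentially the paper's: both evaluate $\langle dH^{(2)},Z_1\rangle = 2\alpha\,w\,P_2\,V'(q_2)$ along the unperturbed orbit, use $P_2 V'(q_2)=-\tfrac12(P_2^2)'$ together with $\dot w=2\alpha^2 W$ to integrate by parts (the paper compresses this into the single step ``from~\eqref{eq:nhim-cov-des}'' that produces $\frac{1}{2\alpha}\int W(t)P_2(t-t_0)^2\,dt$), and then recognize the resulting oscillatory integral as the Fourier transform of $f=P_2^2$ packaged via the symplectic form~$\Omega$. Your sign $z(t)=z_0e^{-2i\alpha^2 t}$ is in fact the one consistent with~\eqref{eq:nhim-cov-des}; the paper writes $e^{+2i\alpha^2 t}$, but since the end result is real this discrepancy is immaterial.
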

\begin{proof}
  Let
  $P=(w,W,q_2,P_2) ∈ \reduction{N}_{0,μ} × Γ(\R) ⊂
  \reduction{W}^{±}_{0,loc}$. There is a unique $t_0$ such that
  $(q_2,P_2)=Γ(-t_0)=(q_2(-t_0),P_2(-t_0))$. The flow mapping
  $φ^t_{0,0}$ maps
  $(z_0,Γ(-t_0)) → (z(t)=\exp(2 i α^2 t) z_0, Γ(t-t_0))$. Thus,
  \begin{align}
    \label{eq:nhim-melnikov-integral}
    M(P) &= \int\limits_{\R} 2 α w(t) P_2(t-t_0) V'(q_2(t-t_0))\, \D{t}, & \text{from~(eq.s \ref{eq:nhim-cov-des}, \ref{eq:nhim-melnikov-fn})} \\\notag
         &= \frac{1}{2 α}\, \int\limits_{\R} W(t) P_2(t-t_0)^2 \, \D{t},& \text{from~\eqref{eq:nhim-cov-des}}\\\notag
         &= \frac{1}{2 α}\, \int\limits_{\R} \imagpart{z(s+t_0)} P_2(s)^2 \, \D{s},& \text{since $P_2$ is real}
  \end{align}
  which yields~\eqref{eq:nhim-melnikov-fn-calc} from the definition of
  the Fourier transform and the symplectic form $Ω$.
\end{proof}
\begin{remark}
  \label{rem:nhim-melnikov-fn-calc}
  The \poincaremelnikov{} function is invariant under the unperturbed
  flow. This follows, in the current case, from elementary properties
  of the Fourier transform.

  A more careful statement of Lemma~\ref{lem:nhim-cov-vf} shows that
  for $ε=0$, the additional scaling parameter $γ$ does not destroy the
  integrability of the differential
  equations~\eqref{eq:nhim-cov-des}. It is possible to determine the
  \poincaremelnikov{} integral for $γ>0$, but since it coincides up to
  third-order in $γ$ with $M$~\eqref{eq:nhim-melnikov-fn-calc}, the
  computation has been elided.
  
  The calculation and result in
  Proposition~\ref{prop:nhim-melnikov-integral} is similar to that
  in~\cite[\S V.3.9]{MR1411677}. In the latter work, Kozlov studies
  the effect of a sinusoidal, time-dependent forcing of an ideal
  stationary planar perfect fluid flow. If the unperturbed systems has
  a saddle fixed point, the $1 ½$-degree of freedom forced system has
  hyperbolic periodic orbits and the \poincaremelnikov{} integral for
  the specific perturbation leads to a Fourier transform.

  The splitting of invariant manifolds in the hamiltonian setting is a
  well-studied problem. Koltsova and Lerman~\cite{MR1316697,MR1409407}
  prove that for a {\em generic} hamiltonian system with a
  saddle-centre critical point, at all nearby positive energy levels,
  there is a hyperbolic periodic orbit with 4 transverse homoclinic
  orbits. This can be explained in a simple way: in appropriate
  coordinates like those used in the proof of
  Proposition~\ref{prop:nhim-melnikov-integral}, the
  \poincaremelnikov{} function is, to lowest order, an indefinite
  quadratic form in $(w,W)$ which vanishes at the origin (the
  saddle-centre). The 4 transverse homoclinics originate from the 4
  half-lines emanating from the origin where the quadratic
  vanishes. In the present case, the zeros of the \poincaremelnikov{}
  function occur on 2 half-lines emanating from the origin of the
  $(w,W)$ plane; it follows that the splitting studied here is not
  generic in the sense of Koltsova and Lerman, and that the reduced
  hamiltonians are not generic.

  On the other hand, Churchill, Pecelli and Rod~\cite{MR569596} study
  the stability of the periodic orbits in the $(q_2,P_2)$ plane which
  limit onto the separatrix $Γ$ from energy levels below $0$. They
  show that, under mild conditions satisfied by the H{é}non-Heiles
  hamiltonian for example, there will be an infinite sequence of
  intervals of energy converging to $0$ where the periodic orbits are
  alternately elliptic and hyperbolic. However, in the current
  problem, one sees that aside from $ε=0$, there are appear to be no
  such continuous family of periodic orbits.
\end{remark}

\begin{theorem}
  \label{thm:nhim-main-thm}
  Assume hypotheses~(\ref{it:spec-ham-1}--\ref{it:spec-ham-5}). Assume
  that $\fouriertransform{f}$ of
  Proposition~\ref{prop:nhim-melnikov-integral} does not vanish at
  $α ²/π$. Then, there exists an $ε_0=ε_0(α)$ such that for all
  $0 < ε, γ < ε_0(α)$, the stable and unstable manifolds of
  $\reduction{N}_{0,μ}$, $\reduction{W}^{±}_{ε,γ}$, intersect
  transversely.

  In particular, in a neighbourhood of the saddle-centre critical
  point, the differential equations possess an invariant horseshoe.
\end{theorem}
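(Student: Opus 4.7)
The plan is to verify that the \poincaremelnikov{} function $M$ of Proposition~\ref{prop:nhim-melnikov-integral} has $0$ as a regular value, to upgrade this from $\gamma=0$ to small $\gamma>0$, and then to deduce the horseshoe by applying Birkhoff-Smale on a transverse section near a Lyapunov periodic orbit encircling the saddle-centre.

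First I would read the non-degeneracy directly off the explicit formula. Writing $z_0 = w + iW$ and $\fouriertransform{f}(\alpha^2/\pi) = A + iB$, the identity in Proposition~\ref{prop:nhim-melnikov-integral} unpacks to
\begin{equation*}
  M = \frac{1}{2\alpha}\bigl[\,w\,(A\sin(2\alpha^2 t_0) - B\cos(2\alpha^2 t_0)) + W\,(A\cos(2\alpha^2 t_0) + B\sin(2\alpha^2 t_0))\,\bigr],
\end{equation*}
which makes it manifest that, at every fixed $t_0$, $M$ is a real-linear functional of $(w,W)$ whose coefficient vector has Euclidean norm $|\fouriertransform{f}(\alpha^2/\pi)|/(2\alpha)>0$ by hypothesis. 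Hence $\D{M}$ never vanishes on $\reduction{W}^{\pm}_{0,\mathrm{loc}}$, so $0$ is a regular value and its zero set is a codimension-$1$ submanifold of the three-dimensional parameter manifold $\reduction{N}_{0,\mu} \times \Gamma(\R)$. The standard \poincaremelnikov{} implicit-function argument then produces, for $0<\varepsilon<\varepsilon_1(\alpha)$ with $\gamma=0$, a two-parameter family of transverse intersections of $\reduction{W}^+_{\varepsilon,0}$ and $\reduction{W}^-_{\varepsilon,0}$.

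Second I would extend to $\gamma>0$ using Remark~\ref{rem:nhim-melnikov-fn-calc}: at $\varepsilon=0$ the system remains integrable for $\gamma>0$, and the $\gamma$-dependent \poincaremelnikov{} function agrees with $M$ modulo $O(\gamma^3)$. Because the $(w,W)$-gradient of $M$ is uniformly bounded below on any compact region of the parameter manifold, the perturbed function still has a non-degenerate zero set for $\gamma$ sufficiently small. Repackaging the smallness constraints into a single $\varepsilon_0=\varepsilon_0(\alpha)$ yields transverse intersection of $\reduction{W}^{\pm}_{\varepsilon,\gamma}$ for all $0<\varepsilon,\gamma<\varepsilon_0(\alpha)$.

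For the horseshoe, the linearisation of $\reduction{G}^{(1)}$ at the saddle-centre is a harmonic oscillator with frequency $2\alpha^2$, so Lyapunov's centre theorem supplies a one-parameter family of periodic orbits inside $\reduction{N}_{\varepsilon,\mu}$ accumulating at the equilibrium. Each such orbit is hyperbolic thanks to the saddle in the $(q_2,P_2)$-direction, and its stable and unstable manifolds sit inside $\reduction{W}^{\pm}_{\varepsilon,\gamma}$; the global transversality of $\reduction{W}^{\pm}_{\varepsilon,\gamma}$ therefore restricts to a transverse homoclinic orbit for each such periodic orbit. Since the thermostated flow preserves the Liouville volume on the reduced phase space, the Poincaré return map on a section transverse to one of these periodic orbits is area-preserving, and the Birkhoff-Smale homoclinic theorem then provides the desired invariant horseshoe. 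The main obstacle I expect is the $\gamma$-extension step: one must control the $O(\gamma^3)$ correction to $M$ uniformly in $\varepsilon$ in a full neighbourhood of the singular saddle-centre, so that the transverse zero set persists all the way down to the equilibrium rather than only on compact sets bounded away from it.
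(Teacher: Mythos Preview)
Your proposal is correct and is essentially the argument the paper intends: the paper gives no separate proof of this theorem, treating it as a direct consequence of Proposition~\ref{prop:nhim-melnikov-integral}, the \poincaremelnikov{} discussion preceding it, and Remark~\ref{rem:nhim-melnikov-fn-calc}. Your unpacking of $M$ in real coordinates, the observation that the $(w,W)$-gradient has constant length $|\fouriertransform{f}(\alpha^2/\pi)|/(2\alpha)>0$, and the passage to a Lyapunov periodic orbit followed by Birkhoff--Smale are exactly the details one fills in.

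Your closing worry is slightly off-target. The $O(\gamma^3)$ correction to $M$ does not depend on $\varepsilon$ (it comes from the $\varepsilon=0$ integrable flow with $\gamma>0$), so there is no uniformity-in-$\varepsilon$ issue. As for behaviour near the equilibrium: since $M$ is \emph{linear} in $(w,W)$, its zero set is a line through the origin, and on each Lyapunov circle $w^2+W^2=r^2$ the restricted function is $r$ times a fixed sinusoid in the angle, hence has two simple zeros for every $r>0$. The transversality therefore holds uniformly down to the equilibrium with no extra work; this is also why the paper's remark counts two (not four) homoclinic half-lines and notes that the situation is non-generic in the sense of Koltsova--Lerman.
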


\begin{corollary}
  \label{cor:nhim-main-thm}
  Assume hypotheses~(\ref{it:spec-ham-1}--\ref{it:spec-ham-5}). Let
  $V(q)$ equal:
  \begin{align*}
    (1)\ \ & -½ q ² (1 - q ⁿ); &&& \text{ or }(2)\ \ & \cos(q)-1,
  \end{align*}
  where $n$ is a positive integer. Then $ε_0(α)>0$ for all but one
  value of $α$ in case (1), and all $α$ in case (2).
\end{corollary}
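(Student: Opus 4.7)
The plan is to compute $\fouriertransform{f}(\alpha^2/\pi)$ explicitly in each case by using the closed-form homoclinic solution, and then to locate its real zeros.

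For case (2), the energy relation $\dot{q}_2^2 = 2(1-\cos q_2)$ on the zero level of $H^{(2)}$ integrates to $P_2(t) = 2\operatorname{sech}(t)$, so $f(t) = 4\operatorname{sech}^2(t)$. I would then invoke the standard Fourier identity
\[
\int_{-\infty}^{\infty} \operatorname{sech}^2(t)\, e^{-i\omega t}\, \D{t} = \frac{\pi\omega}{\sinh(\pi\omega/2)},
\]
which is strictly positive for every real $\omega$ (taking the value $2$ at $\omega=0$). Consequently $\fouriertransform{f}(\alpha^2/\pi) > 0$ for every $\alpha > 0$, giving $\varepsilon_0(\alpha)>0$ throughout.

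Case (1) is the substantive part. Energy conservation reads $\dot{q}^2 = q^2 - q^{n+2}$, solved by the explicit ansatz $q(t) = \operatorname{sech}^{2/n}(nt/2)$, so $q^2(t) = \operatorname{sech}^{4/n}(nt/2)$. Rather than attempt to compute $\widehat{q^{n+2}}$ directly, I would combine the Newton equation $\ddot{q} = q - \tfrac{n+2}{2}q^{n+1}$ with the energy identity to obtain
\[
\frac{\D{}}{\D{t}}(q\dot{q}) = \dot{q}^2 + q\ddot{q} = 2 q^2 - \tfrac{n+4}{2}q^{n+2},
\]
and then eliminate $q^{n+2} = q^2 - \dot{q}^2$ to derive $(n+4)\dot{q}^2 = 2\frac{\D{}}{\D{t}}(q\dot{q}) + n q^2$. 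Using $q\dot{q} = \tfrac{1}{2}(q^2)'$, taking Fourier transforms collapses this to the clean identity
\[
(n+4)\,\widehat{f}(\omega) = (n-\omega^2)\,\widehat{q^2}(\omega).
\]
The classical evaluation $\widehat{\operatorname{sech}^\beta}(\omega) = 2^{\beta-1}|\Gamma(\tfrac{\beta+i\omega}{2})|^2/\Gamma(\beta)$ is strictly positive since $\Gamma$ has no zeros; hence $\widehat{q^2}>0$ on $\R$ and the real zeros of $\widehat{f}$ occur precisely at $\omega = \pm\sqrt{n}$. Converting between the $\omega$- and $2\pi x$-conventions, $\fouriertransform{f}(\alpha^2/\pi)$ vanishes precisely when $\alpha = (n/4)^{1/4}$, which is the unique exceptional value claimed.

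The main obstacle is spotting the algebraic reduction that trades the awkward $\widehat{q^{n+2}}$ for a rational multiple of $\widehat{q^2}$; once that identity is in hand, everything reduces to non-vanishing of the Gamma function. The argument is quite specific to the monomial structure of $V$ in case (1) and to the pendulum in case (2), and a generic perturbation of either would instead force an analysis of the zeros of a genuine combination of Gamma factors.
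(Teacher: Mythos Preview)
Your argument is correct and takes a genuinely different route from the paper in case~(1).  The paper substitutes $z=p/q$ along the separatrix, which leads to $\fouriertransform{p^2}$ expressed as the real part of a Gaussian hypergeometric value $F(a,b;c;-1)$; it then specializes to the discrete family $n=4/(k-2)$, $k>2$ an integer, and evaluates by residues to exhibit the factorization whose only real zeros sit at $\omega=\pm\sqrt{n}$.  Your approach instead uses the closed-form profile $q(t)=\operatorname{sech}^{2/n}(nt/2)$ together with the differential identity
\[
(n+4)\,\fouriertransform{f}(\omega)=(n-\omega^{2})\,\fouriertransform{q^{2}}(\omega),
\]
and then invokes the Beta-integral formula $\fouriertransform{\operatorname{sech}^{\beta}}(\omega)=2^{\beta-1}\bigl|\Gamma\bigl(\tfrac{\beta+i\omega}{2}\bigr)\bigr|^{2}/\Gamma(\beta)>0$ to conclude immediately that the only real zero is $\omega=\sqrt{n}$, i.e.\ $\alpha=(n/4)^{1/4}$.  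This handles every positive integer $n$ uniformly, whereas the paper's residue computation is only carried out for the special values $n\in\{1,2,4\}$ (those for which $k=2+4/n$ is an integer), with the general hypergeometric expression left as an identity rather than a zero-location tool.  Case~(2) is essentially identical in both treatments.
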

\begin{proof}
  In case (1), to compute the Fourier transform of $p ²$, let
  $z=p/q$. Along the homoclinic connection $Γ$ of the saddle at
  $(0,0)$, one has
  \begin{align}
    \label{eq:nhim-main-thm-cor-1}
    z ² + q ⁿ &= 1, &&& 2 \D{z} + n q ⁿ \D{t} &=  2 \D{z} + n \left( 1 - z ² \right)\, \D{t} = 0,\\\notag
    \ln \left( \dfrac{1-z}{1+z} \right) &= n \left( t - t_0 \right).
  \end{align}
  Since $p ² = z ² q ²$, if one chooses $q=1,p=0$ at $t_0=0$, then one obtains that (with $ω=2 π ξ$)
  \begin{align}
    \label{eq:nhim-main-thm-cor-2}
    \fouriertransform{p ²}(ω) &= \dfrac{2}{π n} × \realpart{\int_0^1 z^{b-1}\, \left( 1-z \right)^{c-b-1}\, \left( 1+z\right)^{-a}\, \D{z}}\\
    &= \dfrac{4}{π n} × \realpart{\dfrac{\hypergeometric{a,b}{c}{-1}}{(c-1)(c-2)(c-3)}} &&& \text{by~\cite[15.3.1]{MR1225604}}
  \end{align}
  where $a=1-2/n-i ω/n$, $b=3$, $c=b+1-\bar{a}$ and
  $\hypergeometric{a,b}{c}{z}$ is the Gaussian hyper-geometric function
  defined for $\realpart{c}>\realpart{b}>0$ and
  $z ∈ \C \setminus [1,∞)$. For the case when $n=4/(k-2)$ where $k>2$
  is an integer, the Fourier transform can also be evaluated using the
  method of residues, which results in
  $\fouriertransform{p ²}(ω) = g_k(ω) × G_k(ω)$ where $g_k(2x/(k-2))$
  is a product of a quadratic polynomial with exactly two real roots
  $x= ±\sqrt{k-2}$ and, when $k=2m-1$ is odd the second factor is
  $\prod_{l=1}^{m-2} (x^2 + (2l-1)^2)$, otherwise when $k=2m$ is even,
  the second factor is $\prod_{l=1}^{m-2} (x^2 + 4l^2)$; and
  $G_k(2x/(k-2))$ equals $\sech{x π/2}$ when $k$ is odd, and
  $x \cosech{x π/2}$ when $k$ is even. Selected examples are plotted
  in figure~\ref{fig:fourier-transform}.

  In case (2), one computes that $p ² = 4 \sech{t}^2$ along a
  homoclinic connection. The Fourier transform evaluates to
  $\fouriertransform{p ²}(ω) = 2 ω \cosech{ω π/2}$, which is positive
  everywhere. As an amusing side-note, the substitution $z=\cos(q/2)$
  leads to an integral in the form of the right-hand side
  of~\eqref{eq:nhim-main-thm-cor-2} with $n=1/2$, $a=-i ω/2$, $b=1$
  and $c=2+a$. This gives $\fouriertransform{p ²}(2 ω)$ is the real
  part of $4 \hypergeometric{-i ω,1}{2-i ω}{-1}/π(1-i ω)$.
\end{proof}
\begin{figure}[h]
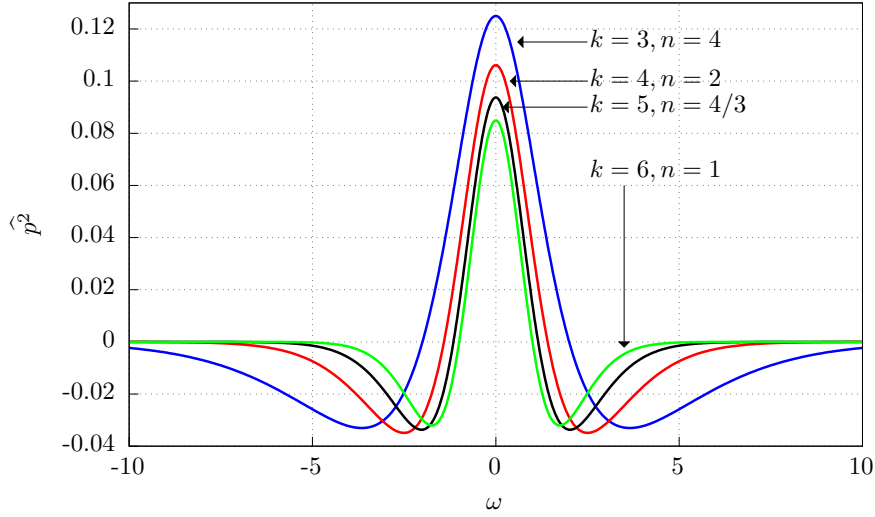

  \label{fig:fourier-transform}
  \caption{The Fourier transform of $p ²$ vs. $ω=2 π ξ$ for selected values of $n$.}
  \ltxfigure{nhht-nhim-main-thm-mac.ltx}{12cm}{!}
\end{figure}

\begin{octavecode}
function r=p2hat(w,k)
n=4/(k-2);
g=(2+i*w)/n;
a=1-conj(g);
b=3;
c=3+g;
r=4/pi/n*(1/((c-3)*(c-2)*(c-1)))*gsl_sf_hyperg_2F1(a,b,c,-1);
endfunction
\end{octavecode}




\subsection{Variable-mass thermostats}
\label{sec:winkler}

Let us extend definition~\ref{def:elementary-2-thermostat} and remove
the ``elementary'' aspect: we will allow the function $F$ to be
weighted by a variable mass, viz.

\begin{definition}
  \label{def:order-2-thermostat}
  A $C^r$, $r>2$, function $N_T(s,S) = Ω(s) F(S) + T \ln s$ is a
  \defn{variable-mass thermostat of order $2$} if $Ω>0$,
  $F(0)=F'(0)=0$, $F''(0)>0$ and $F'$ vanishes only at $0$.
\end{definition}

In place of assumption~\ref{it:spec-ham-4} of section~\ref{sec:spec},
it will be assumed henceforth

\begin{enumerate}
\item[(4)] \label{it:spec-ham-4-vm} The thermostat $N_T(s,S) = Ω(s)
  F(S) + T \ln s$ is a variable-mass thermostat of order $2$.
\end{enumerate}

Thermostats like these have been studied in the
literature~\cite{scopus2-s2.0-0001730129,scopus2-s2.0-4243944612,proquest1859245445}
in different guises. For example, one can obtain this form from
Winkler's thermostat, which rescales momenta by $p/s^e$ ($e=2$ in
Winkler's case), by the transformation
$(s,S) = (s_1^{1/e}, es_1^{1/r} S_1)$ where $1/e+1/r=1$ and
$1 < e, r < ∞$. In the $(s_1,S_1)$ variables, the thermostat rescales
momentum by $p/s_1$ and the effective thermostat temperature is
$T_1=T/e$. The Nos{é}-Hoover thermostat $N(s,S) = ½ S ² + T \ln(s)$ is
transformed to $N(s_1,S_1) = Ω(s_1) F(S_1) + T_1 \, \ln s_1$ where
$Ω(s_1) = e ² s_1^{2/r}$ and $F(S_1) = ½ S_1 ²$ (so $Ω(s_1) = 4 s_1$
for Winkler's thermostat).

Let us state a theorem in the case when $Ω(s)$ is regular at $s=0$:

\begin{theorem}
  \label{thm:winkler-a1-pos-a1=0}
  Assume hypotheses~\ref{it:spec-ham-1}--\ref{it:spec-ham-5}.
  Assume that $Ω, F$ are $C^3$ and $Ω(s) = a_1 + b_1 s + c_1 s ²/2 +
  O(s ³)$, $F(S) = ½ S ² + ⅓ b S ³ + o(S ⁴)$. Then,
  \begin{enumerate}
  \item if $a_1>0$, then the conclusions of
    Proposition~\ref{prop:nhim-melnikov-integral} and
    Theorem~\ref{thm:nhim-main-thm} hold;
  \item if $a_1=0$, $b_1>0$ and $c_1 ≥ 0$ then under the substitution
    \begin{align}
      \label{eq:winkler-a1=0}
      α^2       & = \sqrt{β μ c_1+2 \sqrt{β} b_1}/(2 \sqrt{μ}),  &  &  & m         & = ε^2/(μ^2),                              \\\notag
      β         & = ε^2,                                         &  &  & μ         & = 1/αγ                                    \\
      s         & = μ\sqrt{β}(1-α ε w),                          &  &  & S         & = -ε W/(\sqrt{β} μ α)
    \end{align}
    the hamiltonian differential equations of the rescaled, reduced
    thermostated hamiltonian $β\, \reduction{G}_{μ}$ are transformed
    to
    \begin{align}
      \label{eq:winkler-a1=0-des}
      \dot{W}   & = -2 α ² w                                     &  &  & \dot{w}   & = 2 α ² W                                 \\\notag
                & \phantom{=} + α ε ((α W)^2 -5 (α w)^2 -P_2^2), &  &  &           & \phantom{=}  - γ b α ² W ² - 2 ε α ³ w W, \\\notag
      \dot{P}_2 & = -V'(q_2)                                     &  &  & \dot{q}_2 & = P_2 \left( 1+2 α ε w \right)
    \end{align}
    modulo $O(ε ², γ ², ε γ)$. In particular, the conclusions of
    Proposition~\ref{prop:nhim-melnikov-integral} and
    Theorem~\ref{thm:nhim-main-thm} hold.
  \end{enumerate}
\end{theorem}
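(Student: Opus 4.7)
The plan is to reduce both cases to the framework of Section~\ref{sec:spec} so that Proposition~\ref{prop:nhim-melnikov-integral} and Theorem~\ref{thm:nhim-main-thm} apply essentially verbatim; the substance lies in verifying that the resulting vector fields differ from \eqref{eq:nhim-cov-des} only in ways that leave the \poincaremelnikov{} integrand $\kp{\D{H^{(2)}}}{Z_1}$ unchanged.

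For case~(1), where $a_1 = \Omega(0) > 0$, I would decompose
$\Omega(s) F(S) = \tfrac{a_1}{2} S^2 + [\Omega(s)-a_1]\cdot \tfrac{1}{2}S^2 + \Omega(s)[F(S)-\tfrac{1}{2}S^2]$.
The first piece is an elementary thermostat of order~$2$ with effective mass $1/a_1$, and the latter two vanish to order $3$ at $(s,S)=(0,0)$. Absorbing $a_1$ into the definition of $\alpha$ via $1/\alpha^2 = \mu\sqrt{2/a_1}$ and pushing the remainder through the rescalings of Lemmas~\ref{lem:nhim-reduced}, \ref{lem:nhim-reduced-hamiltonian-u-U}, and~\ref{lem:nhim-cov-vf}, these corrections are absorbed into the remainder $\reduction{R}_\beta$ and produce only $O(\varepsilon^2)+o(\gamma^3)$ terms in the final vector field. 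The $\varepsilon$-linear part of $Z_{\varepsilon,\gamma}$ therefore coincides with that in \eqref{eq:nhim-cov-des}, and Proposition~\ref{prop:nhim-melnikov-integral} applies without change.

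Case~(2) requires more care because $\Omega$ vanishes at $s=0$, so the quadratic-in-$S$ part of the thermostat carries an extra factor of $s$. I would compute $\beta N_T\!\left(\mu\sqrt\beta(1-\alpha\varepsilon w),\,-\varepsilon W/(\sqrt\beta\,\mu\alpha)\right)$ by Taylor expansion in $\varepsilon$ to the stated order, form Hamilton's equations, and apply the non-canonical $\mu,\gamma$-rescaling following the template of Lemma~\ref{lem:nhim-cov-vf}. The ansatz $\alpha^2 = \sqrt{\beta\mu c_1 + 2\sqrt\beta b_1}/(2\sqrt\mu)$ is exactly the value needed to force the coefficient of $w^2+W^2$ in the transformed $\beta\reduction{G}_\mu$ to equal $\alpha^2$: with $\Omega(s) = b_1 s + c_1 s^2/2 + O(s^3)$, the quadratic-in-$S$ part of $\Omega(s)F(S)$ evaluated near $s=\mu\sqrt\beta$ equals $(b_1\mu\sqrt\beta + c_1\mu^2\beta/2)\cdot\tfrac{1}{2}S^2$, and after translating $(s,S)\to(w,W)$ and pairing with the $w^2$ contribution from $T\ln s$, the stated formula is precisely what balances these two pieces. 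The new mixed term $-2\varepsilon\alpha^3 wW$ in $\dot w$ and the additional $(\alpha W)^2$ in $\dot W$ originate from the cross-term $b_1 s\cdot\tfrac{1}{2} S^2$ of $\Omega(s)F(S)$, while $-\gamma b\alpha^2 W^2$ in $\dot w$ comes from the cubic $\tfrac{1}{3}bS^3$ in $F$. All remaining contributions---from $c_1$ at sub-leading order, from $O(s^3)$ in $\Omega$ and $o(S^4)$ in $F$, and from higher terms in the $\ln s$ expansion---are of size $O(\varepsilon^2,\gamma^2,\varepsilon\gamma)$.

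Once the system is in the form \eqref{eq:winkler-a1=0-des}, the \poincaremelnikov{} calculation of Proposition~\ref{prop:nhim-melnikov-integral} proceeds identically, because $\kp{\D{H^{(2)}}}{Z_1}$ depends only on the $P_2$- and $q_2$-components of $Z_1$, which in both \eqref{eq:nhim-cov-des} and \eqref{eq:winkler-a1=0-des} are $0$ and $2\alpha w P_2$; the extra terms arising in case~(2) live only in the $W$- and $w$-components and are therefore invisible to $\D{H^{(2)}}$. Hence $M(P)$ is again given by \eqref{eq:nhim-melnikov-fn-calc}, and the transversality and horseshoe conclusions follow whenever $\fouriertransform{f}(\alpha^2/\pi)\ne 0$. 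The principal obstacle I anticipate is the bookkeeping in case~(2): the small parameters $\varepsilon$ and $\gamma$ are intertwined through the ansatz, the central value of $s$ after the rescaling is itself $O(\varepsilon)$ (so the Taylor expansion of $\Omega$ is about a moving base point, not about $s=0$), and one must verify that the particular combination defining $\alpha^2$ is exactly what preserves the frequency of the linearised $(w,W)$ oscillator at $2\alpha^2$ while driving every other discrepancy into the promised error term.
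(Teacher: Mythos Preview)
Your proposal is correct and follows essentially the same approach as the paper, which disposes of the theorem in a brief remark stating that the argument is ``straightforward and follows essentially the same arguments as above'' and then identifies the origin of the new terms in \eqref{eq:winkler-a1=0-des}. Your observation that the extra terms in case~(2) lie entirely in the $(w,W)$-components and are therefore invisible to $\D H^{(2)}$---so that the \poincaremelnikov{} integrand is literally unchanged---is the key point, and your account of where each new term comes from matches the paper's explanation (the paper phrases the balancing ansatz as $b_1 = 2\alpha^4\mu\beta^{-1/2} - c_1\mu\beta^{1/2}$, which is your $\alpha^2$ formula rearranged).
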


\begin{remark}
  \label{rem:winkler-a1-pos-a1=0}
  The proof of the theorem is straightforward and follows essentially
  the same arguments as above. Let us explain the differences
  between~\eqref{eq:winkler-a1=0-des} and~\eqref{eq:nhim-cov-des}. In
  \eqref{eq:winkler-a1=0}, one can rewrite the equation for $α ²$ as
  $b_1 = 2 α ⁴ μ β^{-½} - c_1 μ β^{½}$, which explains the appearance
  of the $(α W) ²$ term in the equation for
  $\dot{W}$~\eqref{eq:winkler-a1=0-des}. It also explains why the term
  that multiplies $b$ in the $\dot{w}$ equation has a factor of $γ$
  vs. $γ ³$ and the appearance of the $wW$ term in the same
  equation. It should also be noted that when $a_1=0$, the mass-like
  term $1/(2b_1)$ tends to $0$ as $β,1/μ → 0$ and $α > 0$ is
  fixed. This is similar to the constant mass case where $M → 0$ under
  the same conditions.
\end{remark}

\section{Conclusion}
\label{sec:conclusion}

It has been shown that the Nos{é}-Hoover thermostat and
closely-related thermostats create transverse homoclinic points near
the saddle-centre equilibria of the reduced differential equations
when applied to a separable system that is a sum of a $1$-dimensional
ideal gas and a planar pendulum. Several extensions of this example
have been described. Several problems remain, including:
\begin{enumerate}
\item Extend the present results to thermostats like those in
  \cite{sciversesciencedirect_elsevier0375-9601.95.00973-6,10.1016/j.cnsns.2015.08.02,PhysRevE.75.040102}
  where the thermostat $N$ depends on $p,s,S$ and/or the lowest-order
  term in $N$ is $S^4$ (or higher);
\item Extend the present results to multi-variable thermostats, such as
  those discussed in the introduction;
\item Demonstrate the existence of transverse homoclinic points in
  systems like the thermostated harmonic oscillator;
\item Demonstrate the existence of Arnol'd diffusion in Nos{é}-Hoover
  thermostated $n ≥ 2$-degree of freedom systems~\cite{MR0163026}.
\end{enumerate}
The last problem motivated the present paper. However, it turns out
that proving the existence of transverse homoclinic points in a
neighbourhood of the periodic orbits of mixed type is already a
sufficiently rich problem. The third problem also partially motivated
the present paper: in contrast, though, its solution will involve
exponentially small splitting and the more delicate calculations and
error estimates that are involved.

\section*{Acknowledgements}
\label{sec:ack}

This research has been partially supported by the Natural Science and
Engineering Research Council of Canada grant 320 852.

\bibliographystyle{abbrv}
\bibliography{nhht-references}
\end{document}